\documentclass[12pt]{amsart}
\frenchspacing \mathsurround=1pt \emergencystretch=5pt
\tolerance=400

\topmargin = 0in \evensidemargin = 0.2in \oddsidemargin = 0.2in
\textheight = 22cm \headheight = 10pt
\textwidth = 15.7cm

\usepackage{amsmath, amsthm, latexsym, amssymb, amsfonts, epsfig, epsf, xcolor, hyperref}
\usepackage{mathtools}
\usepackage{bm,xcolor,tikz,hyperref}
\usepackage{bm}
\usepackage{blkarray}
\usepackage{booktabs}
\usepackage{framed}
\usepackage{multirow}

\textwidth=16.00cm
\textheight=22.00cm
\topmargin=0.00cm
\oddsidemargin=0.00cm 
\evensidemargin=0.00cm
\headheight=0cm
\headsep=1cm
\headsep=0.5cm 
\numberwithin{equation}{section}
\hyphenation{semi-stable}
\setlength{\parskip}{3pt}

\newtheorem{theorem}{Theorem}[section]

\newtheorem{proposition}[theorem]{Proposition}
\newtheorem{corollary}[theorem]{Corollary}

\newtheorem{example}[theorem]{Example}
\newtheorem{remark}[theorem]{Remark}

\newcommand{\ev}{\textrm{ev}}
\newcommand{\mcc}{\mathcal{C}(\mathcal{S}, \mathcal{A})}
\newcommand{\polyset}{\mathcal{L}(\mathcal{A})}

\newcommand{\tr}{\textrm{Tr}}
\newcommand{\rs}{\textrm{RS}}
\newcommand{\grm}{\textrm{RM}}
\newcommand{\car}{\textrm{Car}}
\newcommand{\arm}{\textrm{ARM}}
\newcommand{\agrm}{\textrm{ARM1}}
\newcommand{\agrmm}{\textrm{ARM2}}
\newcommand{\mcart}{\textrm{M-CC}}
\newcommand{\dmcc}{\textrm{DM-CC}}
\newcommand{\acart}{\textrm{ACar}}
\newcommand{\acar}{\textrm{ACar1}}
\newcommand{\acarr}{\textrm{ACar2}}

\newcommand{\zm}{\mathbb{Z}_{\geq 0}^m}

\newcommand{\rmv}[1]{}
\def\F{\Bbb F}

\makeatletter
\def\blfootnote{\gdef\@thefnmark{}\@footnotetext}
\makeatother

\begin{document}
\title[Repairing decreasing and augmented codes]{Erasures repair for decreasing monomial-Cartesian and augmented Reed-Muller codes of high rate}

\author{Hiram H. L\'opez}
\address[Hiram H. L\'opez]{Department of Mathematics and Statistics\\ Cleveland State University\\ Cleveland, OH USA}
\email{h.lopezvaldez@csuohio.edu}
\author{Gretchen L. Matthews}
\address[Gretchen L. Matthews]{Department of Mathematics\\ Virginia Tech\\ Blacksburg, VA USA}
\email{gmatthews@vt.edu}
\thanks{The work of Hiram H. L\'opez was supported in part by an AMS--Simons Travel Grant. The work of Gretchen L. Matthews was supported in part by NSF under Grant DMS-1855136 and in part by the Commonwealth Cyber Initiative. (Corresponding author: Hiram H. L\'opez.)}
\author{Daniel Valvo}
\address[Daniel Valvo]{Department of Mathematics\\ Virginia Tech\\ Blacksburg, VA USA}
\email{vdaniel1@vt.edu}
\thanks{The portion of this work on augmented Reed-Muller codes was presented at ISIT 2021.}
\keywords{Reed-Muller codes, codes with high rate, Cartesian codes, monomial codes, monomial-Cartesian codes}
\subjclass[2010]{Primary 11T71; Secondary 14G50}

\begin{abstract}
In this work, we present linear exact repair schemes for one or two erasures in decreasing monomial-Cartesian codes $\dmcc$, a family of codes which provides a framework for polar codes. In the case of two erasures, the positions of the erasures should satisfy a certain restriction. We present families of augmented Reed-Muller ($\arm$) and augmented Cartesian codes ($\acart$) which are families of evaluation codes obtained by strategically adding vectors to Reed-Muller and Cartesian codes, respectively. We develop repair schemes for one or two erasures for these families of augmented codes. Unlike the repair scheme for two erasures of $\dmcc,$ the repair scheme for two erasures for the augmented codes has no restrictions on the positions of the erasures. When the dimension and base field are fixed, we give examples where $\arm$ and $\acart$ codes provide a lower bandwidth (resp., bitwidth) in comparison with Reed-Solomon (resp., Hermitian) codes. When the length and base field are fixed, we give examples where $\acart$ codes provide a lower bandwidth in comparison with $\arm$. Finally, we analyze the asymptotic behavior when the augmented codes achieve the maximum rate.
\end{abstract}

\maketitle

\section{Introduction}
The design of linear exact repair schemes for evaluation codes began with the foundational work of Guruswami and Wootters in which they developed a repair scheme (GW-scheme) to efficiently repair an erasure in a Reed-Solomon ($\rs$) code ~\cite{GW}. 
This work served as motivation for linear exact repair schemes for algebraic geometry codes \cite{JLX} and Reed-Muller codes \cite{RM}. In each of these instances, codes are considered over an extension field whose elements may be represented using subsymbols, meaning elements of a smaller base field. Erasure recovery is accomplished using subsymbols rather than the symbols themselves. Under certain conditions, these new schemes require less information than standard approaches to repair. In the distributed storage setting, this allows the information on a failed node to be recovered with the information stored on the remaining nodes. In particular, a codeword is stored so that each node stores a symbol and recovering a failed node exactly is equivalent to fixing an erasure in the codeword \cite{DSS1}, \cite{DSS2}. 

An {\it evaluation code}~\cite{JVV} may be defined by sets of evaluation points and polynomials. Every codeword coordinate of an evaluation code depends of one of the evaluation points.  \textit{Monomial-Cartesian codes}($\mcart$)~\cite{mcc} are evaluation codes that allow for more finely-tuned polynomial sets than $\rs$ and $\grm$ codes which employ polynomials of restricted degrees.  {\it Decreasing monomial-Cartesian codes} ($\dmcc$) are a particular case of $\mcart$ that satisfy the property that the polynomials sets are closed under divisibility. Recently, it was shown in~\cite{CLMS} that polar codes can be seen in terms of $\dmcc.$ This more general setting provides the opportunity to design high rate evaluation codes that admits a repair scheme, complementing the work done for Reed-Muller codes \cite{RM}. We will see these new codes compare favorably with existing families. 

In particular, we introduce {\it augmented Reed-Muller} (\arm) codes and {\it augmented Cartesian} (\arm) codes via monomial-Cartesian codes. These augmented codes are evaluation codes obtained when certain vectors are added to a $\grm$ code and a Cartesian code, respectively. Thus the dimension is increased. We develop repair schemes for one or two erasures for these families of augmented codes. Unlike the repair scheme for two erasures of $\dmcc,$ the repair scheme for two erasures for the augmented codes has no restrictions about the positions of the erasures. Because the GW-scheme repairs a $\rs$ code provided the code satisfies a restriction on the dimension, there are codes and parameters for which the GW-scheme does not apply. In this paper, we fill some of those gaps using $\arm$ codes. When the dimension and base field are fixed, there are instances where $\arm$ codes provide a lower bandwidth in comparison with $\rs$ codes and a lower bitwidth versus Hermitian codes. When the length and base field are fixed, we give examples where $\acart$ codes provide a lower bandwidth in comparison with $\arm.$

In Section~\ref{pre}, we provide notation and definitions needed for the rest of the work. This section includes the necessary background on the families and the main properties of codes for which we develop repair schemes: decreasing monomial-Cartesian, augmented Reed-Muller, and augmented Cartesian codes. In Sections~\ref{single} and \ref{two}, we develop repair schemes for one and two erasures, respectively, on the families $\dmcc$ (with some restrictions on the positions of the erasures), $\arm$ and $\acart.$ In Section~\ref{results}, we explain some circumstances where a particular family may be preferable to others. Section~\ref{conclusions} concludes the paper with a summary of the main ideas and results of the work.

\section{Preliminaries}\label{pre}
Let $q$ be a power of a prime $p$, $\mathbb{F}_q$ denote the finite field with $q$ elements, and $K=\mathbb{F}_{q^t}$ be an  extension field of $\F_q$ of degree $t = [K : \mathbb{F}_q].$ Given a linear code $C$ of length $n$ over $K$, elements of the field extension $K$ are called {\it symbols} and the elements of the base field $\mathbb{F}_q$ are called {\it subsymbols}. As $K$ is an $\mathbb{F}_q$-vector space, every coordinate for every vector $c\in C$ depends of $t$ subsymbols. A {\it repair scheme} is an algorithm that recovers any component $c_i$ of the vector $c\in C$ using other components. The {\it bandwidth} $b$ is the number of subsymbols that the scheme needs to download to recover an erased entry $c_i$. As a vector $c \in K^n$ is composed of $nt$ subsymbols, the {\it bandwidth rate} $\displaystyle \frac{b}{nt}$ represents the fraction of the vector $c$ that the repair scheme uses to recover the erased entry $c_i$. The {\it bitwidth} $b\log_2(q)$ represents the number of bits that the scheme needs to download to recover the erased entry $c_i$.

 The {\it field trace} is defined as the polynomial $\tr_{K/\mathbb{F}_q}(x) \in K[x]$ given by
\[\tr_{K/\mathbb{F}_q}(x) = x^{q^{t-1}} + \dots + x^{q^0}.\]
For the sake of convenience, we will often refer to $\tr_{K/\mathbb{F}_q}(x)$ as simply $\tr(x)$ when the extension being used is obvious from the context. Given $a \in K$, the field trace $\tr(a) \in \mathbb{F}_q$. Additionally, $\tr: K \rightarrow \F_q$ is an $\mathbb{F}_q$-linear map. 
More useful properties of the trace function are found in Remarks \ref{Dual Bases Remark} and \ref{Trace Ker} below. They will be necessary for the repair schemes for decreasing and augmented codes.

\begin{remark}\cite[Definition 2.30 and Theorem 2.40]{Finite Fields Book}\label{Dual Bases Remark}
Let $\mathcal B=\{z_1, \dots, z_t\}$ be a basis of $K$ over $\mathbb{F}_q.$ Then there exists a basis  $\{z^\prime_1, \dots, z^\prime_t\}$ of $K$ over $\mathbb{F}_q$, called the dual basis of $\mathcal B$ such that $\tr(z_i z^\prime_j)=\delta_{ij}$ is an indicator function. For $\alpha \in K$, \[\alpha = \sum_{i=1}^t \tr(\alpha z_i)z_i^\prime.\]
Thus, determining $\alpha$ is equivalent to finding $\tr(\alpha z_i),$ for $i \in \{ 1, \dots, t \}$.
\end{remark}
The next observation follows directly from the Rank-Nullity Theorem. 
\begin{remark}\label{Trace Ker}
Given $\alpha \in K \setminus \{ 0 \}$, consider $\tr(\alpha x)$ as a function of $x.$ Then $\ker(\tr(\alpha x))$ has dimension $t - 1$ as an $\F_q$-vector space. 
\end{remark}
Next, we review decreasing monomial-Cartesian codes, setting the foundation for the augmented codes. \rmv{We continue with the same notation than previous section, in particular, $q$ represents a power of a prime $p$, $\mathbb{F}_q$ is a finite field with $q$ elements and $K=\mathbb{F}_{q^t}$ a field extension of degree $t = [K : \mathbb{F}_q].$} Let $R = K[x_1, \dots, x_m]$ be the set of polynomials in $m$ variables over $K.$ For a lattice point $\bm{a} = (a_1, \dots, a_m) \in \zm$, $\bm{x}^{\bm{a}}$ denotes the monomial $x_1^{a_1}\cdots x_m^{a_m} \in R$. For $\ell \in \mathbb{Z}_{\geq 0},$ $[\ell]:=\{1,\ldots, \ell \}.$ Given a finite set $\mathcal{A} \subset \zm$, 
the subspace of polynomials of $R$ that are $K$-linear combinations of monomials $\bm{x}^{\bm{a}},$ where $\bm{a} \in \mathcal{A}$, is \[ \polyset=\operatorname{Span}_K\{\bm{x}^{\bm{a}} : \bm{a} \in \mathcal{A} \}\subseteq R.\]

Let $\mathcal{S}=S_1\times\cdots\times S_m \subseteq K^m$ be a Cartesian product, where every $S_i \subseteq K$ has $n_i:=|S_i|>0,$ and $n:=|\mathcal{S}|$.  We will assume that $n_1\leq \cdots \leq n_m.$ Fix a linear order on $\mathcal{S}=\{\bm{s}_1,\ldots,\bm{s}_n\},$ $\bm{s}_1 \prec \cdots \prec \bm{s}_n.$ The \textit{monomial-Cartesian code} associated with $\mathcal{S}$ and $\mathcal{A}$ is given by
\[\mcc = \left\{ \ev_{\mathcal{S}}(f) : f \in \polyset \right\} \subseteq K^n,\]
where $\ev_{\mathcal{S}}(f) = \left( f(\bm{s}_1),\ldots,f(\bm{s}_n)\right).$
From now on, we assume that the degree of each monomial $M\in \polyset$ in $x_i$ is less than $n_i$. Then 
the length and rate of the monomial-Cartesian code $\mcc$ are given by $|\mathcal{S}|$ and $\frac{|\mathcal{A}|}{|\mathcal{S}|},$ respectively \cite[Proposition 2.1]{mcc}.  

The dual of $\mcc$, denoted by $\mcc^\perp$, is the set of all $\bm{\alpha} \in K^n$ such that $\bm{\alpha} \cdot \bm{\beta}=0$ for all $\bm{\beta}\in\mcc$, where $\bm{\alpha} \cdot \bm{\beta}$ is the ordinary inner product in $K^n$. The dual code $\mcc^\perp$ was studied in \cite{mcc} in terms of the vanishing ideal of $\mathcal{S}$ and in \cite{LSV} in terms of the indicator functions of $\mathcal{S}.$

It is useful to focus on the case where the monomial set $\polyset$ is {\it closed under divisibility}, meaning $\polyset$ satisfies the property that if $M\in \polyset$ and $M^\prime$ divides $M,$ then $M^\prime \in \polyset.$ In this case, the code $\mcc$ is called a {\it decreasing monomial-Cartesian} code. According to \cite[Theorem 3.3]{CLMS}\label{21.05.14}, the dual of a decreasing monomial-Cartesian code is also a decreasing monomial-Cartesian code: 
\[C(\mathcal{S},\mathcal{A})^{\perp} = \operatorname{Span}_K\left\{\operatorname{Res}_{\mathcal{S}}\left( {\bm{x}^{\bm{a}}}\right): \bm{a}\in \mathcal{A}^\complement_{\mathcal{S}}\right\} \] where 
$\displaystyle \operatorname{Res}_{\mathcal S}(f)=
\left(\lambda_{\bm{s}_1}f(\bm{s}_1),\ldots, \lambda_{\bm{s}_n}f(\bm{s}_n) \right),$
$\displaystyle \lambda_{\bm s}=\left(
\prod_{i=1}^m \prod_{\substack{s_{i}^\prime\in S_i\setminus\{s_i\}}}\left(s_i-s_i^{\prime}\right)
\right)^{-1}$, and $$\mathcal{A}^\complement_{\mathcal{S}}=\left\{ (n_1-1-a_1, \dots,n_m-1- a_m) \in \zm : \bm{a} \notin \mathcal{A} \right\}.$$
In fact, taking $\operatorname{D}_{\mathcal{S}} = \operatorname{diag}\left(\lambda_{\bm{s}_1},\ldots,\lambda_{\bm{s}_n} \right)$ to be the $n \times n$ diagonal matrix with $\lambda_{\bm{s}_i}$ in position $(i,i)$ and $0$ in any other position, it is immediate that 
\begin{equation} \label{21.05.23}
C(\mathcal{S},\mathcal{A})^{\perp} = \operatorname{D}_{\mathcal{S}} C(\mathcal{S},\mathcal{A}_{\mathcal{S}}^\complement).
\end{equation}

\rmv{The {\it residue vector} of $f \in R$ at the Cartesian product $\mathcal{S}=\{\bm{s}_1,\ldots,\bm{s}_n\}$
is defined by
\[\displaystyle \operatorname{Res}_{\mathcal S}(f)=
\left(\lambda_{\bm{s}_1}f(\bm{s}_1),\ldots, \lambda_{\bm{s}_n}f(\bm{s}_n) \right).\]
Observe that $\operatorname{Res}_{\mathcal S}(f) = \operatorname{D}_{\mathcal{S}} \ev_{\mathcal{S}}(f),$ where $\operatorname{D}_{\mathcal{S}} = \operatorname{diag}\left(\lambda_{\bm{s}_1},\ldots,\lambda_{\bm{s}_n} \right)$ is the diagonal matrix of size $n\times n$ with $\lambda_{\bm{s}_i}$ in position $(i,i)$ and $0$ in any other position.
The {\it complement} of $\mathcal{A}$ in $\mathcal{S}$ is denoted and defined by
\[\mathcal{A}^\complement_{\mathcal{S}} = \left\{ (n_1-1-a_1, \dots,n_m-1- a_m) \in \zm : \bm{a} \notin \mathcal{A} \right\} .\]
According to \cite[Theorem 3.3]{CLMS}\label{21.05.14}, the dual of a decreasing monomial-Cartesian code is also a decreasing monomial-Cartesian code: 
\[C(\mathcal{S},\mathcal{A})^{\perp} = \operatorname{Span}_K\left\{\operatorname{Res}_{\mathcal{S}}\left( {\bm{x}^{\bm{a}}}\right): \bm{a}\in \mathcal{A}^\complement_{\mathcal{S}}\right\} .\] Moreover, the set $\displaystyle \Delta=\left\{\operatorname{Res}_{\mathcal{S}} \left({\bm{x}^{\bm{a}}}\right): \bm{a}\in \mathcal{A}^\complement_{\mathcal{S}}\right\}$ is a basis for $C(\mathcal{S},\mathcal{A})^{\perp}$.
Given the diagonal matrix $\operatorname{D}_{\mathcal{S}}$ and a linear code $C,$ the notation $\operatorname{D}_{\mathcal{S}} C$ represents the code obtained when all the elements of $C$ are multiplied by the matrix $\operatorname{D}_{\mathcal{S}}.$
\begin{proposition}\label{21.05.23}
Consider that $\polyset$ is closed under divisibility. Then
\[C(\mathcal{S},\mathcal{A})^{\perp} = \operatorname{D}_{\mathcal{S}} C(\mathcal{S},\mathcal{A}_{\mathcal{S}}^\complement).\]
\end{proposition}
\begin{proof}
Since $\operatorname{Res}_{\mathcal S}(f) = \operatorname{diag}\left( \lambda_{\bm{s}_1},\ldots,\lambda_{\bm{s}_n} \right) \ev_{\mathcal{S}}(f),$ 
\begin{eqnarray*}
C(\mathcal{S},\mathcal{A})^{\perp} &=&
\operatorname{Span}_K\left\{\operatorname{Res}_{\mathcal{S}}\left( {\bm{x}^{\bm{a}}}\right): \bm{a}\in \mathcal{A}^\complement_{\mathcal{S}}\right\}\\
&=&\operatorname{Span}_K\left\{\operatorname{D}_{\mathcal{S}}  \ev_{\mathcal{S}}(f): \bm{a}\in \mathcal{A}^\complement_{\mathcal{S}}\right\}\\
&=&\operatorname{D}_{\mathcal{S}}  \operatorname{Span}_K\left\{\ev_{\mathcal{S}}(f): \bm{a}\in \mathcal{A}^\complement_{\mathcal{S}}\right\}\\
&=& \operatorname{D}_{\mathcal{S}}  C(\mathcal{S},\mathcal{A}_{\mathcal{S}}^\complement).
\end{eqnarray*}
This completes the proof.
\end{proof}}

A 
 {\it Cartesian code}, introduced in \cite{Geil} and then independently in \cite{lopez-villa}, is defined by
\[ \car(\mathcal{S}, k) = \mathcal{C}(\mathcal{S}, \mathcal{A}_{Car}(k)), \]
where $\mathcal{A}_{Car}(k) = \{ \bm{a} \in \zm : a_i \leq n_i-1, \sum_{i=1}^m a_i \leq k \}$. 
By  Equation~(\ref{21.05.23}), the dual of the Cartesian code $\car(\mathcal{S}, k)$ is given by
$\car(\mathcal{S}, k)^\perp=\operatorname{D}_{\mathcal{S}} \car(\mathcal{S}, k^\perp),$ where $k^\perp = \sum_{i=1}^m (n_i-1) - k - 1$.
\rmv{
A Cartesian code is shown in Example~\ref{21.05.20}.
\begin{example}\rm\label{21.05.20}
Take the subsets $S_1=\{0,1,2, 5 \}$ and $S_2=\{1,2,3,4,5 \}$ of the field $K=\mathbb{F}_7.$ The code $\car(S_1\times S_2, 2)$ is generated by the vectors $\ev_{S_1\times S_2} (\text{\textcolor{red}{$M$}}),$ where \textcolor{red}{$M$} is a monomial whose exponent is a point in Figure~\ref{ExCar} (a). The dual code $\car(S_1\times S_2, 2)^\perp$ is generated by the vectors $\operatorname{Res}_{S_1\times S_2} (\text{\textcolor{blue}{$M$}}),$ where \textcolor{blue}{$M$} is a monomial whose exponent is a point in Figure~\ref{ExCar} (b). In other words, the dual code $\car(S_1\times S_2, 2)^\perp$ is monomially equivalent to the code $\car(S_1\times S_2, 4),$ which is generated by the vectors $\ev_{S_1 \times S_2} (\text{\textcolor{blue}{$M$}}),$ where \textcolor{blue}{$M$} is a monomial whose exponent is a point in Figure~\ref{ExCar} (b).
\begin{figure}[h]
\vskip 0cm
\noindent
\begin{minipage}[t]{0.45\textwidth}
\begin{center}
\begin{tikzpicture}[scale=0.55]
\draw [-latex] (0,0) -- (4.5,0)node[right] {$K$};
\draw [dashed] (0,1)node[left]{1} -- (4,1)node[right] {};
\draw [dashed] (0,2)node[left]{2} -- (4,2)node[right] {};
\draw [dashed] (0,3)node[left]{3} -- (4,3)node[right] {};
\draw [dashed] (0,4)node[left]{4} -- (4,4)node[right] {};
\draw [-latex] (0,0)node[below left]{0} -- (0,5.5)node[above] {$K$};
\draw [dashed] (1,0)node[below]{1} -- (1,5)node[right] {};
\draw [dashed] (2,0)node[below]{2} -- (2,5)node[right] {};
\draw [dashed] (3,0)node[below]{3} -- (3,5)node[right] {};

{\foreach \a in {(0,0), (0,1), (1,0), (1,1), (0,2), (2,0)}
{\fill [color=red]\a {circle(.2cm)};}}
\end{tikzpicture}
\vskip 0cm
(a)
\end{center}
\end{minipage}
\begin{minipage}[t]{0.45\textwidth}
\begin{center}
\begin{tikzpicture}[scale=0.55]
\draw [-latex] (6,6) -- (6,0.5)node[below] {$K$};
\draw [dashed] (2,2)node[left]{} -- (6,2)node[right] {4};
\draw [dashed] (2,3)node[left]{} -- (6,3)node[right] {3};
\draw [dashed] (2,4)node[left]{} -- (6,4)node[right] {2};
\draw [dashed] (2,5)node[left]{} -- (6,5)node[right] {1};
\draw [dashed] (2,6)node[left]{} -- (6,6)node[right] {};
\draw [-latex] (6,6)node[above right]{0} -- (1.5,6)node[left] {$K$};
\draw [dashed] (3,1)node[below]{} -- (3,6)node[above] {3};
\draw [dashed] (4,1)node[below]{} -- (4,6)node[above] {2};
\draw [dashed] (5,1)node[below]{} -- (5,6)node[above] {1};
\draw [dashed] (6,1)node[below]{} -- (6,6)node[above] {};
{\foreach \a in {(6,6), (6,5), (6,4), (6,3), (6,2),
(5,6), (5,5), (5,4), (5,3),
(4,6), (4,5), (4,4),
(3,6), (3,5)}
{\fill [color=red]\a {circle(.2cm)};}}

\end{tikzpicture}
\vskip 0cm
(b)
\end{center}
\end{minipage}
\caption{ The code $\car(S_1\times S_2, 2)$ in Example~\ref{21.05.20} with $K=\mathbb{F}_7$ is generated by the vectors 
$\ev_{S_1\times S_2} (\text{\textcolor{red}{$M$}}),$ where \textcolor{red}{$M$} is a monomial
whose exponent corresponds to a point in (a). The dual
$\car(S_1\times S_2, 2)^{\perp}$ is monomially equivalent to the code generated by the vectors
$\ev_{S_1\times S_2} (\text{\textcolor{blue}{$M$}}),$ where \textcolor{blue}{$M$} is a monomial whose exponent corresponds to a point in (b).}
\label{ExCar}
\vskip 0cm
\end{figure}
\end{example}}

Observe that if $\mathcal{S}=K^m,$ the Cartesian code $\car(\mathcal{S}, k)$ is the Reed-Muller code $\grm(K^m, k).$ The dual code $\grm(K^m, k)^\perp$ has been extensively studied in the literature. See for instance~\cite{RM,DGM,Huffman-Pless}, where it is shown that the dual of a $\grm$ code is another $\grm$ code. \rmv{A Reed-Muller code is shown in Example~\ref{21.05.15}. 
\begin{example}\rm\label{21.05.15}
Take $K=\mathbb{F}_7.$ The code $\grm(K^2, 3)$ is generated by the vectors $\ev_{K^2} (\text{\textcolor{red}{$M$}}),$ where \textcolor{red}{$M$} is a monomial whose exponent is a point in Figure~\ref{ExRM} (a). The dual
$\grm(K^2, 3)^{\perp}$ is generated by the vectors
$\ev_{K^2} (\text{\textcolor{blue}{$M$}}),$ where \textcolor{blue}{$M$} is a
monomial whose exponent is a point in Figure~\ref{ExRM} (b).
\begin{figure}[h]
\vskip 0.cm
\noindent
\begin{minipage}[t]{0.45\textwidth}
\begin{center}
\begin{tikzpicture}[scale=0.55]
\draw [-latex] (0,0) -- (6.5,0)node[right] {$K$};
\draw [dashed] (0,1)node[left]{1} -- (6,1)node[right] {};
\draw [dashed] (0,2)node[left]{2} -- (6,2)node[right] {};
\draw [dashed] (0,3)node[left]{3} -- (6,3)node[right] {};
\draw [dashed] (0,4)node[left]{4} -- (6,4)node[right] {};
\draw [dashed] (0,5)node[left]{5} -- (6,5)node[right] {};
\draw [dashed] (0,6)node[left]{6} -- (6,6)node[right] {};
\draw [-latex] (0,0)node[below left]{0} -- (0,6.5)node[above] {$K$};
\draw [dashed] (1,0)node[below]{1} -- (1,6)node[right] {};
\draw [dashed] (2,0)node[below]{2} -- (2,6)node[right] {};
\draw [dashed] (3,0)node[below]{3} -- (3,6)node[right] {};
\draw [dashed] (4,0)node[below]{4} -- (4,6)node[right] {};
\draw [dashed] (5,0)node[below]{5} -- (5,6)node[right] {};
\draw [dashed] (6,0)node[below]{6} -- (6,6)node[right] {};

{\foreach \a in {(0,0), (0,1), (1,0), (1,1), (0,2), (2,0)}
{\fill [color=red]\a {circle(.2cm)};}}
\end{tikzpicture}
\vskip 0.cm
(a)
\end{center}
\end{minipage}
\begin{minipage}[t]{0.45\textwidth}
\begin{center}
\begin{tikzpicture}[scale=0.55]
\draw [-latex] (6,6) -- (6,-0.5)node[below] {$K$};
\draw [dashed] (0,0)node[left]{} -- (6,0)node[right] {6};
\draw [dashed] (0,1)node[left]{} -- (6,1)node[right] {5};
\draw [dashed] (0,2)node[left]{} -- (6,2)node[right] {4};
\draw [dashed] (0,3)node[left]{} -- (6,3)node[right] {3};
\draw [dashed] (0,4)node[left]{} -- (6,4)node[right] {2};
\draw [dashed] (0,5)node[left]{} -- (6,5)node[right] {1};
\draw [dashed] (0,6)node[left]{} -- (6,6)node[right] {};
\draw [-latex] (6,6)node[above right]{0} -- (-0.5,6)node[left] {$K$};
\draw [dashed] (0,0)node[below]{} -- (0,6)node[above] {6};
\draw [dashed] (1,0)node[below]{} -- (1,6)node[above] {5};
\draw [dashed] (2,0)node[below]{} -- (2,6)node[above] {4};
\draw [dashed] (3,0)node[below]{} -- (3,6)node[above] {3};
\draw [dashed] (4,0)node[below]{} -- (4,6)node[above] {2};
\draw [dashed] (5,0)node[below]{} -- (5,6)node[above] {1};
\draw [dashed] (6,0)node[below]{} -- (6,6)node[above] {};
\foreach \i in {3,...,0}
{\foreach \j in {\i,...,6}
{\fill [color=blue](3 - \i,\j) {circle(.2cm)};}}

\foreach \i in {4,...,6}
{\foreach \j in {0,...,6}
{\fill [color=blue](\i,\j) {circle(.2cm)};}}
\end{tikzpicture}
\vskip 0.cm
(b)
\end{center}
\end{minipage}
\caption{ The code $\grm(K^2, 2)$ in Example~\ref{21.05.15} with $K=\mathbb{F}_7$ is generated by the vectors 
$\ev_{K^2} (\text{\textcolor{red}{$M$}}),$ where \textcolor{red}{$M$} is a monomial
whose exponent corresponds to a point in (a). The dual
$\grm(K^2, 2)^{\perp}$ is the RM code $\grm(K^2, 9),$ which is generated by the vectors
$\ev_{K^2} (\text{\textcolor{blue}{$M$}}),$ where \textcolor{blue}{$M$} is a monomial whose exponent corresponds to a point in (b).}
\label{ExRM}
\vskip 0cm
\end{figure}
\end{example}}

\section{Augmented codes}

In this section, we define the augmented Cartesian codes for which we will provide repair schemes in the following section. Augmented Cartesian codes generalize the augmented Reed-Muller codes considered in \cite{LMV}. Keeping the notation from the previous sections, we describe two families below.

\subsection{Augmented Cartesian codes 1}
An {\it augmented Cartesian code 1} (ACar1 code) over $K=\F_{q^t}$ is defined by
\[ \acar(\mathcal{S}, {\bm k}) = \mathcal{C}(\mathcal{S}, \mathcal{A}_{Car1}({\bm k})), \]
where $\bm{k} = (k_1, \dots, k_m),$ with $ 0 \leq k_i \leq n_i - q^{t - 1},$ and
\[\mathcal{A}_{Car1}(\bm{k}) = \prod_{i = 1}^m \left\{0,\ldots,n_i-1\right\} \setminus \prod_{i = 1}^m\left\{k_i,\ldots,n_i-1\right\} .\]
An augmented Cartesian code 1 is shown in Example~\ref{21.01.01}.
\begin{example}\label{21.01.01} \rm
Take $K=\mathbb{F}_{17}$. Let $S_1, S_2 \subseteq K$ with $n_1 = |S_1| = 6$ and $n_2 = |S_2| = 7$. The code $\acar(S_1 \times S_2, (2,2))$ is generated by the vectors $\ev_{S_1 \times S_2} (\text{\textcolor{red}{$M$}}),$ where \textcolor{red}{$M$} is a monomial whose exponent is a point in Figure~\ref{ExACar1} (a). The dual code $\acar(S_1 \times S_2, (2,2))^{\perp}$ is generated by the vectors $\operatorname{Res}_{S_1 \times S_2} (\text{\textcolor{blue}{$M$}}),$ where \textcolor{blue}{$M$} is a
monomial whose exponent is a point in Figure~\ref{ExACar1} (b).
\begin{figure}[h]
\noindent
\begin{minipage}[t]{0.23\textwidth}
\begin{tikzpicture}[scale=0.55]
\draw [-latex] (0,0) -- (5.5,0)node[right] {$A_1$};
\draw [dashed] (0,1)node[left]{1} -- (5,1)node[right] {};
\draw [dashed] (0,2)node[left]{2} -- (5,2)node[right] {};
\draw [dashed] (0,3)node[left]{3} -- (5,3)node[right] {};
\draw [dashed] (0,4)node[left]{4} -- (5,4)node[right] {};
\draw [dashed] (0,5)node[left]{5} -- (5,5)node[right] {};
\draw [dashed] (0,6)node[left]{6} -- (5, 6)node[right] {};
\draw [-latex] (0,0)node[below left]{0} -- (0,6.5)node[above] {$A_2$};
\draw [dashed] (1,0)node[below]{1} -- (1,6)node[right] {};
\draw [dashed] (2,0)node[below]{2} -- (2,6)node[right] {};
\draw [dashed] (3,0)node[below]{3} -- (3,6)node[right] {};
\draw [dashed] (4,0)node[below]{4} -- (4,6)node[right] {};
\draw [dashed] (5,0)node[below]{5} -- (5,6)node[right] {};

\foreach \i in {0,...,5}
{\foreach \j in {0,...,1}
{\fill [color=red](\i,\j) {circle(.2cm)};}}
\foreach \i in {0,...,1}
{\foreach \j in {2,...,6}
{\fill [color=red](\i,\j) {circle(.2cm)};}}
\end{tikzpicture}
\begin{center}
(a)
\end{center}
\end{minipage}
\begin{minipage}[t]{0.23\textwidth}
\begin{tikzpicture}[scale=0.55]
\draw [-latex] (6,6) -- (6,-0.5)node[below] {$A_2$};
\draw [dashed] (1,0)node[left]{} -- (6,0)node[right] {6};
\draw [dashed] (1,1)node[left]{} -- (6,1)node[right] {5};
\draw [dashed] (1,2)node[left]{} -- (6,2)node[right] {4};
\draw [dashed] (1,3)node[left]{} -- (6,3)node[right] {3};
\draw [dashed] (1,4)node[left]{} -- (6,4)node[right] {2};
\draw [dashed] (1,5)node[left]{} -- (6,5)node[right] {1};
\draw [dashed] (1,6)node[left]{} -- (6,6)node[right] {};
\draw [-latex] (6,6)node[above right]{0} -- (0.5,6)node[left] {$A_1$};
\draw [dashed] (1,0)node[below]{} -- (1,6)node[above] {5};
\draw [dashed] (2,0)node[below]{} -- (2,6)node[above] {4};
\draw [dashed] (3,0)node[below]{} -- (3,6)node[above] {3};
\draw [dashed] (4,0)node[below]{} -- (4,6)node[above] {2};
\draw [dashed] (5,0)node[below]{} -- (5,6)node[above] {1};
\draw [dashed] (6,0)node[below]{} -- (6,6)node[above] {};
\foreach \i in {3,...,6}
{\foreach \j in {2,...,6}
{\fill [color=blue](\i,\j) {circle(.2cm)};}}
\end{tikzpicture}
\begin{center}
(b)
\end{center}
\end{minipage}
\caption{ The code $\acar(S_1 \times S_2, (2,2))$ in Example \ref{21.01.01} with $K=\mathbb{F}_{17}$ is generated by the vectors $\ev_{S_1 \times S_2} (\text{\textcolor{red}{$M$}}),$ where \textcolor{red}{$M$} is a monomial whose exponent is a point in (a). The dual code $\acar(S_1 \times S_2, (2,2))^{\perp}$ is generated by the vectors $\operatorname{Res}_{S_1 \times S_2} (\text{\textcolor{blue}{$M$}}),$ where \textcolor{blue}{$M$} is a monomial whose exponent is a point in (b).}
\label{ExACar1}
\end{figure}
\end{example}

When  $k_i=k \leq q^t-q^{t-1}$ for all $i \in [m]$ and $\mathcal{S}=K^m,$ the augmented Cartesian code 1 $\acar(\mathcal{S}, {\bm k})$ is called an {\it augmented Reed-Muller code 1}, which is denoted by $\agrm(K^m, k).$ An augmented Reed-Muller code 1 is shown in Example~\ref{21.05.16}.
\begin{example}\label{21.05.16} \rm
Take $K=\mathbb{F}_7.$ The code $\agrm(K^2, 2)$ is generated by the vectors
$\ev_{K^2} (\text{\textcolor{red}{$M$}}),$ where \textcolor{red}{$M$} is a monomial
whose exponent is a point in Figure~\ref{ExARM1} (a). The dual
$\agrm(K^2, 2)^{\perp}$ is generated by the vectors
$\ev_{K^2} (\text{\textcolor{blue}{$M$}}),$ where \textcolor{blue}{$M$} is a
monomial whose exponent is a point in Figure~\ref{ExARM1} (b).
\begin{figure}[h]
\vskip 0cm
\noindent
\begin{minipage}[t]{0.45\textwidth}
\begin{center}
\begin{tikzpicture}[scale=0.55]
\draw [-latex] (0,0) -- (6.5,0)node[right] {$K$};
\draw [dashed] (0,1)node[left]{1} -- (6,1)node[right] {};
\draw [dashed] (0,2)node[left]{2} -- (6,2)node[right] {};
\draw [dashed] (0,3)node[left]{3} -- (6,3)node[right] {};
\draw [dashed] (0,4)node[left]{4} -- (6,4)node[right] {};
\draw [dashed] (0,5)node[left]{5} -- (6,5)node[right] {};
\draw [dashed] (0,6)node[left]{6} -- (6,6)node[right] {};
\draw [-latex] (0,0)node[below left]{0} -- (0,6.5)node[above] {$K$};
\draw [dashed] (1,0)node[below]{1} -- (1,6)node[right] {};
\draw [dashed] (2,0)node[below]{2} -- (2,6)node[right] {};
\draw [dashed] (3,0)node[below]{3} -- (3,6)node[right] {};
\draw [dashed] (4,0)node[below]{4} -- (4,6)node[right] {};
\draw [dashed] (5,0)node[below]{5} -- (5,6)node[right] {};
\draw [dashed] (6,0)node[below]{6} -- (6,6)node[right] {};
\foreach \i in {0,...,6}
{\foreach \j in {0,...,1}
{\fill [color=red](\i,\j) {circle(.2cm)};}}
\foreach \i in {0,...,1}
{\foreach \j in {2,...,6}
{\fill [color=red](\i,\j) {circle(.2cm)};}}
\end{tikzpicture}
\vskip 0cm
(a)
\end{center}
\end{minipage}
\begin{minipage}[t]{0.4\textwidth}
\begin{center}
\begin{tikzpicture}[scale=0.55]
\draw [-latex] (6,6) -- (6,-0.5)node[below] {$K$};
\draw [dashed] (0,0)node[left]{} -- (6,0)node[right] {6};
\draw [dashed] (0,1)node[left]{} -- (6,1)node[right] {5};
\draw [dashed] (0,2)node[left]{} -- (6,2)node[right] {4};
\draw [dashed] (0,3)node[left]{} -- (6,3)node[right] {3};
\draw [dashed] (0,4)node[left]{} -- (6,4)node[right] {2};
\draw [dashed] (0,5)node[left]{} -- (6,5)node[right] {1};
\draw [dashed] (0,6)node[left]{} -- (6,6)node[right] {};
\draw [-latex] (6,6)node[above right]{0} -- (-0.5,6)node[left] {$K$};
\draw [dashed] (0,0)node[below]{} -- (0,6)node[above] {6};
\draw [dashed] (1,0)node[below]{} -- (1,6)node[above] {5};
\draw [dashed] (2,0)node[below]{} -- (2,6)node[above] {4};
\draw [dashed] (3,0)node[below]{} -- (3,6)node[above] {3};
\draw [dashed] (4,0)node[below]{} -- (4,6)node[above] {2};
\draw [dashed] (5,0)node[below]{} -- (5,6)node[above] {1};
\draw [dashed] (6,0)node[below]{} -- (6,6)node[above] {};
\foreach \i in {2,...,6}
{\foreach \j in {2,...,6}
{\fill [color=blue](\i,\j) {circle(.2cm)};}}
\end{tikzpicture}
\vskip 0.cm
(b)
\end{center}
\end{minipage}
\caption{ The $\agrm(K^2, 2)$ code in Example \ref{21.05.16} with $K=\mathbb{F}_7$ is generated by the vectors 
$\ev_{K^2} (\text{\textcolor{red}{$M$}}),$ where \textcolor{red}{$M$} is a monomial
whose exponent corresponds to a point in (a).
$\agrm(K^2, 2)^{\perp}$ is generated by the vectors
$\ev_{K^2} (\text{\textcolor{blue}{$M$}}),$ where \textcolor{blue}{$M$} is a monomial whose exponent corresponds to a point in (b).}
\label{ExARM1}
\vskip 0.cm
\end{figure}
\end{example}
In Figure~\ref{ExARM1}, the monomials that define $\grm(K^2, 2)$ may be seen as those under the diagonal in $\agrm(K^2, 2)$.  The monomial diagram for any Reed-Muller code will restrict the allowable monomials under some diagonal excluding many monomials along or near the edges, resulting in codes with lower dimensions and rates. This explains why $\agrm$ codes have higher rates than their associated Reed-Muller codes.

The next result is relevant for developing the repair scheme for $\acar(\mathcal{S}, {\bm k}).$ 
\begin{proposition}\label{21.05.26}
The following holds for the augmented Cartesian code 1.
\begin{itemize}
\item[\rm (a)] The dimension is $\displaystyle \dim \acar(\mathcal{S}, {\bm k}) = \prod_{j = 1}^m n_j - \prod_{j = 1}^m (n_j - k_j).$
\item[\rm (b)] The dual is $\displaystyle \acar(\mathcal{S}, {\bm k})^\perp = \operatorname{D}_{\mathcal{S}} \mathcal{C}(\mathcal{S}, \mathcal{A}_{Car1}^\perp(\bm{k}) ),$\newline
where $\displaystyle \mathcal{A}_{Car1}^\perp(\bm{k})= \prod_{i=1}^m\left\{0,\ldots,n_i-k_i-1\right\}.$
\end{itemize}
\end{proposition}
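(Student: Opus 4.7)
The plan is to reduce both parts to the general framework for decreasing monomial-Cartesian codes recalled in Section~\ref{pre}.

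For (a), under the standing hypothesis that every monomial appearing in $\polyset$ has degree in $x_i$ strictly less than $n_i$, the evaluation map is injective on $\polyset$ and the dimension of $\mcc$ equals $|\mathcal{A}|$ (the rate statement quoted from \cite[Proposition 2.1]{mcc}). Thus I only need to count $\mathcal{A}_{Car1}(\bm{k})$: the ambient box $\prod_{i=1}^m\{0,\ldots,n_i-1\}$ has $\prod_{j=1}^m n_j$ points, the removed sub-box $\prod_{i=1}^m\{k_i,\ldots,n_i-1\}$ has $\prod_{j=1}^m(n_j-k_j)$ points, and these subtract cleanly to yield the stated formula.

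For (b), the strategy is to invoke Equation~(\ref{21.05.23}), which gives $\mcc^{\perp} = \operatorname{D}_{\mathcal{S}}\, \mathcal{C}(\mathcal{S}, \mathcal{A}_{\mathcal{S}}^\complement)$ whenever $\polyset$ is closed under divisibility. Two verifications are required. First I would check closure: if $\bm{a} \in \mathcal{A}_{Car1}(\bm{k})$, then by definition $\bm{a}$ lies in the ambient box but not in $\prod_i\{k_i,\ldots,n_i-1\}$, so some coordinate satisfies $a_j < k_j$. For any $\bm{b}$ with $\bm{b}\leq \bm{a}$ componentwise, the coordinate $b_j\leq a_j<k_j$ also fails the lower bound, so $\bm{b}\notin \prod_i\{k_i,\ldots,n_i-1\}$ while still lying in the ambient box—hence $\bm{b}\in\mathcal{A}_{Car1}(\bm{k})$. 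Second, I would identify the complement: the points of the ambient box that are \emph{not} in $\mathcal{A}_{Car1}(\bm{k})$ are exactly those of $\prod_{i=1}^m\{k_i,\ldots,n_i-1\}$, and the coordinatewise involution $a_i\mapsto n_i-1-a_i$ maps this sub-box bijectively onto $\prod_{i=1}^m\{0,\ldots,n_i-k_i-1\}=\mathcal{A}_{Car1}^\perp(\bm{k})$. Plugging this into Equation~(\ref{21.05.23}) delivers the claim.

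There is no serious obstacle: once the duality machinery from \cite{CLMS} is in hand, the argument is an unraveling of definitions. The only mild subtlety is verifying closure under divisibility, which crucially uses that the removed set is a sub-box \emph{anchored at the top corner} $(n_1-1,\ldots,n_m-1)$; this geometric feature is exactly what guarantees that the obstruction $a_j<k_j$ is preserved under componentwise reduction, and it is the reason the augmented construction fits into the decreasing framework at all.
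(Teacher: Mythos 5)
Your proof is correct and follows essentially the same route as the paper: for (a) you count $\mathcal{A}_{Car1}(\bm{k})$ as the ambient box minus the top-anchored sub-box, and for (b) you identify $\mathcal{A}_{Car1}(\bm{k})^{\complement}_{\mathcal{S}}$ with $\mathcal{A}_{Car1}^{\perp}(\bm{k})$ via the coordinatewise reflection $a_i \mapsto n_i-1-a_i$ and then apply Equation~(\ref{21.05.23}), exactly as the paper does. Your explicit check that $\mathcal{A}_{Car1}(\bm{k})$ is closed under divisibility (so that the duality formula for decreasing monomial-Cartesian codes actually applies) is a hypothesis the paper's proof leaves implicit, and including it is a small but genuine improvement.
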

\begin{proof}
(a) The statement follows immediately, because
\[ \displaystyle |\mathcal{A}_{Car1}(\bm{k})| = |\prod_{j = 1}^m \{0, \dots, n_j - 1\} \setminus \prod_{j = 1}^m \{k_j , \dots, n_j - 1\}| = \displaystyle \prod_{j = 1}^m n_j - \prod_{j = 1}^m (n_j - k_j ).\] \newline
(b) Observe that
$\displaystyle \mathcal{A}_{Car1} ({\bm k})^\complement_{\mathcal{S}} = \mathcal{A}_{Car1}^\perp(\bm{k}).$
Indeed, \newline $\displaystyle \left(n_1-1-a_1,\ldots,n_m-1-a_m\right)\in \mathcal{A}_{Car1} (\bm{k})^\complement_{\mathcal{S}}$
if and only if \newline
$\displaystyle (a_1,\ldots,a_m)\in \prod_{i = 1}^m \left\{0,\ldots,n_i-1\right\} \setminus \mathcal{A}_{Car1}(\bm{k}),$
which happens if and only if \newline
$\displaystyle (n_1-1,\ldots,n_m-1)-(a_1,\ldots,a_m)\in\mathcal{A}_{Car1}^\perp(\bm{k}).$ Thus, the result follows by  Equation~(\ref{21.05.23}).
\end{proof}
\subsection{Augmented Cartesian codes 2}
We next define a second family of high-rate Cartesian codes.
The {\it augmented Cartesian code 2} (ACar2 code) is defined by
\[ \acarr(\mathcal{S}, {\bm k}) = \mathcal{C}(\mathcal{S}, \mathcal{A}_{Car2}({\bm k})), \]
where $\bm{k} = (k_1, \dots, k_m),$ with $ 0 \leq k_i \leq n_i - q^{t - 1},$ and \newline
$\displaystyle \mathcal{A}_{Car2}(\bm{k}) = \prod_{j = 1}^m \left\{0,\ldots,n_j-1\right\} \setminus \bigcup_{j = 1}^mL_j,$ with
\[ L_j = \{ \bm{a} : k_j \leq a_j \leq n_j {-} 1 \ , \ a_i = n_i {-} 1 \text{ for all } i \neq j\}.\]
An augmented Cartesian code 2 is shown in Example~\ref{21.05.24}.
\begin{example}\label{21.05.24} \rm
Take $K=\mathbb{F}_{17}$. Let $S_1$ and $S_2$ be subsets of $K$ with $n_1 = |S_1| = 6$ and $n_2 = |S_2| = 7$.
The code $\acarr(S_1 \times S_2, (2, 5))$ is generated by the vectors $\ev_{S_1 \times S_2} (\text{\textcolor{red}{$M$}}),$
where \textcolor{red}{$M$} is a monomial whose exponent is a point in Figure~\ref{21.05.21} (a). The dual code
$\acarr(S_1 \times S_2, (2, 5))^{\perp}$ is generated by the vectors
$\operatorname{Res}_{S_1 \times S_2} (\text{\textcolor{blue}{$M$}}),$ where \textcolor{blue}{$M$} is a
monomial whose exponent is a point in Figure~\ref{21.05.21} (b).
\begin{figure}[h]
\noindent
\begin{minipage}[t]{0.23\textwidth}
\begin{tikzpicture}[scale=0.55]
\draw [-latex] (0,0) -- (5.5,0)node[right] {$A_1$};
\draw [dashed] (0,1)node[left]{1} -- (5,1)node[right] {};
\draw [dashed] (0,2)node[left]{2} -- (5,2)node[right] {};
\draw [dashed] (0,3)node[left]{3} -- (5,3)node[right] {};
\draw [dashed] (0,4)node[left]{4} -- (5,4)node[right] {};
\draw [dashed] (0,5)node[left]{5} -- (5,5)node[right] {};
\draw [dashed] (0,6)node[left]{6} -- (5,6)node[right] {};
\draw [-latex] (0,0)node[below left]{0} -- (0,6.5)node[above] {$A_2$};
\draw [dashed] (1,0)node[below]{1} -- (1,6)node[right] {};
\draw [dashed] (2,0)node[below]{2} -- (2,6)node[right] {};
\draw [dashed] (3,0)node[below]{3} -- (3,6)node[right] {};
\draw [dashed] (4,0)node[below]{4} -- (4,6)node[right] {};
\draw [dashed] (5,0)node[below]{5} -- (5,6)node[right] {};
\foreach \i in {0,...,4}
{\foreach \j in {0,...,5}
{\fill [color=red](\i,\j) {circle(.2cm)};}}
\foreach \i in {0,...,1}
{\fill [color=red](\i,6) {circle(.2cm)};}
\foreach \j in {0,...,4}
{\fill [color=red](5,\j) {circle(.2cm)};}
\end{tikzpicture}
\begin{center}
(a)
\end{center}
\end{minipage}
\begin{minipage}[t]{0.23\textwidth}
\begin{tikzpicture}[scale=0.55]
\draw [-latex] (6,6) -- (6,-0.5)node[below] {$A_2$};
\draw [dashed] (1,0)node[left]{} -- (6,0)node[right] {6};
\draw [dashed] (1,1)node[left]{} -- (6,1)node[right] {5};
\draw [dashed] (1,2)node[left]{} -- (6,2)node[right] {4};
\draw [dashed] (1,3)node[left]{} -- (6,3)node[right] {3};
\draw [dashed] (1,4)node[left]{} -- (6,4)node[right] {2};
\draw [dashed] (1,5)node[left]{} -- (6,5)node[right] {1};
\draw [dashed] (1,6)node[left]{} -- (6,6)node[right] {};
\draw [-latex] (6,6)node[above right]{0} -- (0.5,6)node[left] {$A_1$};
\draw [dashed] (1,0)node[below]{} -- (1,6)node[above] {5};
\draw [dashed] (2,0)node[below]{} -- (2,6)node[above] {4};
\draw [dashed] (3,0)node[below]{} -- (3,6)node[above] {3};
\draw [dashed] (4,0)node[below]{} -- (4,6)node[above] {2};
\draw [dashed] (5,0)node[below]{} -- (5,6)node[above] {1};
\draw [dashed] (6,0)node[below]{} -- (6,6)node[above] {};
\foreach \i in {3,...,6}
{\fill [color=blue](\i,6) {circle(.2cm)};}
\foreach \j in {5,...,6}
{\fill [color=blue](6,\j) {circle(.2cm)};}
\end{tikzpicture}
\begin{center}
(b)
\end{center}
\end{minipage}
\caption{ The code $\acarr(S_1 \times S_2, (2, 5))$ in Example \ref{21.05.24} with $K=\mathbb{F}_{17}$ is generated by the vectors 
$\ev_{S_1 \times S_2} (\text{\textcolor{red}{$M$}}),$
where \textcolor{red}{$M$} is a monomial whose exponent is a point in (a). The dual
$\acarr(S_1 \times S_2, (2, 5))^{\perp}$ is generated by the vectors
$\operatorname{Res}_{S_1 \times S_2} (\text{\textcolor{blue}{$M$}}),$ where \textcolor{blue}{$M$} is a
monomial whose exponent is a point in (b).}
\label{21.05.21}
\end{figure}
\end{example}

When $k_i=k\leq q^t-q^{t-1}$ for all $i \in [m]$ and $\mathcal{S}=K^m,$ the augmented Cartesian code 2 $\acarr(\mathcal{S}, {\bm k})$ is called an {\it augmented Reed-Muller code} 2, which is denoted by $\agrmm(K^m, k).$ An augmented Reed-Muller code 2 is shown in Example~\ref{21.01.03}.
\begin{example}\label{21.01.03} \rm
Take $K=\mathbb{F}_7.$ The code $\agrmm(K^2, 3)$ is generated by the vectors
$\ev_{K^2} (\text{\textcolor{red}{$M$}}),$ where \textcolor{red}{$M$} is a monomial
whose exponent is a point in Figure~\ref{21.05.19} (a). The dual
$\agrmm(K^2, 3)^{\perp}$ is generated by the vectors
$\ev_{K^2} (\text{\textcolor{blue}{$M$}}),$ where \textcolor{blue}{$M$} is a
monomial whose exponent is a point in Figure~\ref{21.05.19} (b).
\begin{figure}[h]
\vskip 0.cm
\noindent
\begin{minipage}[t]{0.23\textwidth}
\begin{tikzpicture}[scale=0.55]
\draw [-latex] (0,0) -- (6.5,0)node[right] {$K$};
\draw [dashed] (0,1)node[left]{1} -- (6,1)node[right] {};
\draw [dashed] (0,2)node[left]{2} -- (6,2)node[right] {};
\draw [dashed] (0,3)node[left]{3} -- (6,3)node[right] {};
\draw [dashed] (0,4)node[left]{4} -- (6,4)node[right] {};
\draw [dashed] (0,5)node[left]{5} -- (6,5)node[right] {};
\draw [dashed] (0,6)node[left]{6} -- (6,6)node[right] {};
\draw [-latex] (0,0)node[below left]{0} -- (0,6.5)node[above] {$K$};
\draw [dashed] (1,0)node[below]{1} -- (1,6)node[right] {};
\draw [dashed] (2,0)node[below]{2} -- (2,6)node[right] {};
\draw [dashed] (3,0)node[below]{3} -- (3,6)node[right] {};
\draw [dashed] (4,0)node[below]{4} -- (4,6)node[right] {};
\draw [dashed] (5,0)node[below]{5} -- (5,6)node[right] {};
\draw [dashed] (6,0)node[below]{6} -- (6,6)node[right] {};
\foreach \i in {0,...,5}
{\foreach \j in {0,...,5}
{\fill [color=red](\i,\j) {circle(.2cm)};}}
\foreach \i in {0,...,2}
{\fill [color=red](\i,6) {circle(.2cm)};}
\foreach \j in {0,...,2}
{\fill [color=red](6,\j) {circle(.2cm)};}
\end{tikzpicture}
\begin{center}
\vskip 0.cm
(a)
\end{center}
\end{minipage}
\begin{minipage}[t]{0.23\textwidth}
\begin{tikzpicture}[scale=0.55]
\draw [-latex] (6,6) -- (6,-0.5)node[below] {$K$};
\draw [dashed] (0,0)node[left]{} -- (6,0)node[right] {6};
\draw [dashed] (0,1)node[left]{} -- (6,1)node[right] {5};
\draw [dashed] (0,2)node[left]{} -- (6,2)node[right] {4};
\draw [dashed] (0,3)node[left]{} -- (6,3)node[right] {3};
\draw [dashed] (0,4)node[left]{} -- (6,4)node[right] {2};
\draw [dashed] (0,5)node[left]{} -- (6,5)node[right] {1};
\draw [dashed] (0,6)node[left]{} -- (6,6)node[right] {};
\draw [-latex] (6,6)node[above right]{0} -- (-0.5,6)node[left] {$K$};
\draw [dashed] (0,0)node[below]{} -- (0,6)node[above] {6};
\draw [dashed] (1,0)node[below]{} -- (1,6)node[above] {5};
\draw [dashed] (2,0)node[below]{} -- (2,6)node[above] {4};
\draw [dashed] (3,0)node[below]{} -- (3,6)node[above] {3};
\draw [dashed] (4,0)node[below]{} -- (4,6)node[above] {2};
\draw [dashed] (5,0)node[below]{} -- (5,6)node[above] {1};
\draw [dashed] (6,0)node[below]{} -- (6,6)node[above] {};
\foreach \i in {3,...,6}
{\fill [color=blue](\i,6) {circle(.2cm)};}
\foreach \j in {3,...,6}
{\fill [color=blue](6,\j) {circle(.2cm)};}
\end{tikzpicture}
\begin{center}
\vskip 0.cm
(b)
\end{center}
\end{minipage}
\caption{ The code $\agrmm(K^2, 3)$ in Example \ref{21.01.03} with $K=\mathbb{F}_7$ is generated by the vectors 
$\ev_{K^2} (\text{\textcolor{red}{$M$}}),$ where \textcolor{red}{$M$} is a monomial
whose exponent corresponds to a point in (a). The dual
$\agrmm(K^2, 3)^{\perp}$ is generated by the vectors
$\ev_{K^2} (\text{\textcolor{blue}{$M$}}),$ where \textcolor{blue}{$M$} is a monomial whose exponent corresponds to a point in (b).}
\label{21.05.19}
\vskip 0.cm
\end{figure}
\end{example}

The next result is relevant for developing the repair scheme for $\acarr(\mathcal{S}, {\bm k}).$ 
\begin{proposition}
The following holds for the augmented Cartesian code 2.
\begin{itemize}
\item[\rm (a)] The dimension is $\displaystyle \dim \acarr(\mathcal{S}, {\bm k}) = \prod_{i = 1}^m n_i - \sum_{i = 1}^m (n_i - k_i-1) -1.$
\item[\rm (b)] The dual is $\displaystyle \acarr(\mathcal{S}, {\bm k})^\perp = \operatorname{D}_{\mathcal{S}} \mathcal{C}(\mathcal{S}, \mathcal{A}_{Car2}^\perp(\bm{k}) ),$
where $\displaystyle \mathcal{A}_{Car2}^\perp(\bm{k}) = \bigcup_{j=1}^mL_j^\prime $ and 
\[ L_j^\prime = \{ \bm{a} \ : \ 0 \leq a_j \leq n_j {-} k_j {-} 1 \ , \ a_i = 0 \text{ for all } i \neq j\}.\]
\end{itemize}
\end{proposition}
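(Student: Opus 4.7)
The plan is to mirror the proof structure of Proposition~\ref{21.05.26}, leveraging Equation~(\ref{21.05.23}) after first confirming that $\acarr(\mathcal{S},{\bm k})$ is a decreasing monomial-Cartesian code. For the latter, I would check that $\mathcal{A}_{Car2}(\bm{k})$ is closed under divisibility: if $\bm{a} \in \mathcal{A}_{Car2}(\bm{k})$ and $\bm{a}' \leq \bm{a}$ componentwise, then for each $j$ we can verify $\bm{a}' \notin L_j$ by splitting on whether $\bm{a} \notin L_j$ fails because $a_j < k_j$ (in which case $a_j' \leq a_j < k_j$) or because some $a_i < n_i - 1$ for $i \neq j$ (in which case $a_i' \leq a_i < n_i - 1$).

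For part (a), I would write $\dim \acarr(\mathcal{S},\bm{k}) = \prod_i n_i - |\bigcup_{j=1}^m L_j|$. The key observation is that the sets $L_j$ almost partition their union except for the single overlap point $(n_1 - 1, \ldots, n_m - 1)$: for $j \neq j'$, an element $\bm{a} \in L_j \cap L_{j'}$ must simultaneously satisfy $a_{j'} = n_{j'}-1$ (from $\bm{a}\in L_j$) and $a_j = n_j - 1$ (from $\bm{a}\in L_{j'}$), forcing $\bm{a} = (n_1-1,\ldots,n_m-1)$, which lies in every $L_j$ since $k_j \leq n_j - q^{t-1} \leq n_j - 1$. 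Therefore $|\bigcup_j L_j| = |L_1| + \sum_{j=2}^{m}(|L_j| - 1) = \sum_{j=1}^m (n_j - k_j) - (m-1)$, and rearranging this gives the claimed formula $\prod_i n_i - \sum_i (n_i - k_i - 1) - 1$.

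For part (b), having established that the code is decreasing, I would apply Equation~(\ref{21.05.23}) to reduce the problem to computing the complement $\mathcal{A}_{Car2}(\bm{k})^\complement_{\mathcal{S}}$. An element $\bm{a} \notin \mathcal{A}_{Car2}(\bm{k})$ means $\bm{a} \in L_j$ for some $j$, i.e., $k_j \leq a_j \leq n_j - 1$ and $a_i = n_i - 1$ for all $i \neq j$. Applying the map $\bm{a} \mapsto (n_1 - 1 - a_1, \ldots, n_m - 1 - a_m)$ sends $L_j$ bijectively onto $L_j'$, since the $j$-th coordinate $n_j - 1 - a_j$ ranges over $\{0, \ldots, n_j - k_j - 1\}$ while all other coordinates become $0$. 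Thus $\mathcal{A}_{Car2}(\bm{k})^\complement_{\mathcal{S}} = \bigcup_{j=1}^m L_j' = \mathcal{A}_{Car2}^\perp(\bm{k})$, and Equation~(\ref{21.05.23}) yields the dual description.

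I do not anticipate a major obstacle: the only subtle point is the inclusion-exclusion step in part (a), where one must notice that all pairwise (hence higher) intersections of the $L_j$ collapse to the single maximal lattice point, since otherwise a naive application of $|\bigcup L_j| = \sum |L_j|$ would overcount by exactly $m - 1$ and give a formula off by that amount.
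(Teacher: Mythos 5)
Your proposal is correct and follows essentially the same route as the paper: counting $\bigl|\bigcup_j L_j\bigr|$ by observing that all pairwise intersections collapse to the single point $(n_1-1,\ldots,n_m-1)$ for part (a), and applying Equation~(\ref{21.05.23}) after identifying $\mathcal{A}_{Car2}(\bm{k})^\complement_{\mathcal{S}}$ with $\bigcup_j L_j'$ for part (b). You in fact supply two details the paper leaves implicit, namely the closure-under-divisibility check that licenses the use of Equation~(\ref{21.05.23}) and the explicit bijection $L_j \to L_j'$, both of which you verify correctly.
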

\begin{proof}
We have $\displaystyle |\mathcal{A}_{Car2}(\bm{k})| = |\prod_{i = 1}^m \left\{0,\ldots,n_i-1\right\} \setminus \bigcup_{i = 1}^mL_i| = \prod_{i = 1}^m n_i - |\bigcup_{i = 1}^m L_i|$. As $\displaystyle \bigcap_{i=1}^mL_i=\{\bm{a}\},$ where
$\displaystyle \bm{a}=(n_1 - 1,\ldots, n_m - 1),$ and
$\displaystyle \left( L_i\setminus \{\bm{a}\} \right) \bigcap \left( L_j\setminus \{\bm{a}\} \right)=\emptyset$ for all $i\neq j,$ then
$\displaystyle |\bigcup_{i=1}^mL_i| = \sum_{i=1}^m |L_i\setminus \{\bm{a}\}| + 1= \sum_{i = 1}^m (n_i - k_i - 1) +1.$
Thus $\displaystyle |\mathcal{A}_{Car2}(\bm{k})| = \prod_{i = 1}^m n_i {-} \sum_{i = 1}^m (n_i - k_i - 1) - 1$. (b) In a similar way to the proof of Proposition~\ref{21.05.26}, it is not difficult to check that
$\mathcal{A}_{Car2} (\bm{k})^\complement_{K^m} = \mathcal{A}_{Car2}^\perp(\bm{k}).$ Thus, the result follows from  Equation~(\ref{21.05.23}).
\end{proof}

\section{Single Erasure Repair Schemes}\label{single}
In this section, we develop a repair scheme that repairs a single erasure of a decreasing monomial-Cartesian code $\mcc$ that satisfies the property that $\mathcal{A} \cap L_j = \emptyset$ for some $j$, where $L_j= \{ \bm{a} : n_j - q^{t - 1} \leq  a_j < n_j, a_i = n_{i-1}-1 \text{ for } i\neq j \}.$ As a consequence, we obtain repair schemes for single erasures of augmented Cartesian and Reed-Muller codes.

\rmv{
We continue with the same notation than previous section, in particular, $q$ represents a power of a prime $p$, $\mathbb{F}_q$ is a finite field with $q$ elements and $K=\mathbb{F}_{q^t}$ a field extension of degree $t = [K : \mathbb{F}_q].$ We also have that $\mathcal{S}=S_1\times\cdots\times S_m \subseteq K^m$ is a Cartesian product, $n_i$ denotes $|S_i|,$ and $n$ denotes $|\mathcal{S}|=\prod _{i=1}^m n_i.$ We are assuming that $n_1\leq \cdots \leq n_m,$ that the degree of each monomial $M\in \polyset$ in $x_i$ is less than $n_i,$ and that the monomial set $\polyset$ is closed under divisibility, in other words, $\polyset$ satisfies the property that if $M\in \polyset$ and $M^\prime$ divides $M,$ then $M^\prime \in \polyset.$}

\begin{theorem}\label{21.06.17}
Let $\mcc$ be a decreasing monomial-Cartesian code of length $n$ such that there is $j\in [n]$ with $\mathcal{A} \cap L_j = \emptyset$. Then there exists a repair scheme for one erasure with bandwidth at most
\[b = n-1 + (t-1)\left(\frac{n}{n_j}-1\right).\]
\end{theorem}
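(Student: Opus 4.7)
The plan is to extend the Guruswami--Wootters linear repair scheme to the decreasing monomial-Cartesian setting by exploiting the hypothesis $\mathcal{A}\cap L_j=\emptyset$, which forces low-degree monomials purely in $x_j$ to appear in $\mathcal{A}^\complement_{\mathcal{S}}$ and therefore supplies enough ``univariate in $x_j$'' dual codewords. Let $\bm{s}_{i^*}=(s_{i^*,1},\ldots,s_{i^*,m})\in\mathcal{S}$ denote the erased position and fix a basis $\mathcal{B}=\{z_1,\ldots,z_t\}$ of $K$ over $\F_q$. Applying Remark~\ref{Dual Bases Remark} to the rescaled basis $\{\lambda_{\bm{s}_{i^*}}z_1,\ldots,\lambda_{\bm{s}_{i^*}}z_t\}$ reduces the recovery of $c_{i^*}$ to computing the $t$ subsymbols $\tr(\lambda_{\bm{s}_{i^*}}z_\ell c_{i^*})\in\F_q$, each of which I would extract from a parity check provided by a suitably chosen dual codeword.

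For each $\ell\in[t]$ I would define the univariate polynomial
\[f_\ell(x_j)=\frac{\tr(z_\ell(x_j-s_{i^*,j}))}{x_j-s_{i^*,j}}\in K[x_j],\]
which is genuinely a polynomial of degree $q^{t-1}-1$ (since $\tr$ has degree $q^{t-1}$ and vanishes at $0$); reading off the linear coefficient of $\tr(z_\ell y)$ shows $f_\ell(s_{i^*,j})=z_\ell$. The hypothesis $\mathcal{A}\cap L_j=\emptyset$ guarantees that the exponent $(0,\ldots,0,b,0,\ldots,0)$, with $b$ in slot $j$, lies in $\mathcal{A}^\complement_{\mathcal{S}}$ for every $0\le b\le q^{t-1}-1$, so by Equation~\eqref{21.05.23} the vector $h^{(\ell)}:=\res_{\mathcal{S}}(f_\ell)$ belongs to $\mcc^\perp$. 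Its $i$-th coordinate is $\lambda_{\bm{s}_i}f_\ell(s_{i,j})$, and in particular $h^{(\ell)}_{i^*}=\lambda_{\bm{s}_{i^*}}z_\ell$. Taking the trace of the dual-code relation $\langle h^{(\ell)},c\rangle=0$ yields
\[\tr(\lambda_{\bm{s}_{i^*}}z_\ell c_{i^*})=-\sum_{i\neq i^*}\tr\bigl(\lambda_{\bm{s}_i}f_\ell(s_{i,j})c_i\bigr).\]

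To bound the bandwidth I would partition the $n-1$ helper positions according to whether $s_{i,j}=s_{i^*,j}$ or not. For the $n/n_j-1$ positions with $s_{i,j}=s_{i^*,j}$, one has $f_\ell(s_{i,j})=z_\ell$ for every $\ell$, so the $t$ required trace subsymbols together determine $\lambda_{\bm{s}_i}c_i$ (hence $c_i$); each such helper contributes a full $t$ subsymbols. For the remaining $n-n/n_j$ positions, setting $y_i:=s_{i,j}-s_{i^*,j}\neq 0$ gives $f_\ell(s_{i,j})=\tr(z_\ell y_i)/y_i$, and since $\tr(z_\ell y_i)\in\F_q$ it factors out of the outer trace:
\[\tr\bigl(\lambda_{\bm{s}_i}f_\ell(s_{i,j})c_i\bigr)=\tr(z_\ell y_i)\cdot\tr(\lambda_{\bm{s}_i}y_i^{-1}c_i),\]
so all $t$ values are $\F_q$-multiples of the single subsymbol $\tr(\lambda_{\bm{s}_i}y_i^{-1}c_i)$ and one download per position suffices. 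Summing,
\[b=(n/n_j-1)t+(n-n/n_j)=n-1+(t-1)(n/n_j-1),\]
as claimed.

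The main technical point I expect to worry about is precisely this factorization identity, which is what converts what naively looks like $t$ downloads per helper into one; once it is in hand, the existence of $h^{(\ell)}$ via Equation~\eqref{21.05.23} and the bookkeeping of the bandwidth sum are both direct consequences of the definitions.
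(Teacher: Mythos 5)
Your proposal is correct and follows essentially the same route as the paper's proof: the same repair polynomials $\tr(z_\ell(x_j-s^*_j))/(x_j-s^*_j)$, the same use of $\mathcal{A}\cap L_j=\emptyset$ together with Equation~(\ref{21.05.23}) to place their residue vectors in $\mcc^\perp$, the same partition of helpers by whether they agree with the erased point in coordinate $j$, and the same factorization of $\tr(z_\ell y_i)\in\F_q$ out of the outer trace to get one subsymbol per helper off that fiber. The only cosmetic difference is that you phrase the recovery step via a rescaled dual basis, whereas the paper recovers $\lambda_{\bm{s}^*}f(\bm{s}^*)$ directly from its traces; these are equivalent.
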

\begin{proof}
Let $\bm{s}^*=(s^*_1,\ldots,s^*_m) \in \mathcal{S}$ and assume that the entry $f(\bm{s}^*)$ of the codeword $(f(\bm{s}_1),\ldots,f(\bm{s}_n))\in \mcc$ has been erased. Let $\{z_1, \dots, z_t\}$ be a basis for $K$ over $\mathbb{F}_q.$ For $i\in [t],$ define the following polynomials
\begin{equation*}
p_i(\bm{x})=
\frac{\tr(z_i(x_j - s^*_j))}{(x_j - s^*_j)}
= {z_i} +
z_i^q(x_j - s^*_j)^{q-1}+ \cdots +
z_i^{q^{t-1}}(x_j - s^*_j)^{q^{t-1}-1}.
\end{equation*}
As $\mathcal{A} \cap L_j = \emptyset,$
$\left(L_j\right)_{\mathcal{S}}^\complement = \{(0,\ldots,0,a) : 0 \leq  a < q^{t - 1} \} \subseteq \mathcal{A}_{\mathcal{S}}^\complement.$ Thus, for $i\in [t],$
every polynomial $p_i(\bm{x})\in \mathcal{L}(\left(L_j\right)_{\mathcal{S}}^\complement) \subseteq \mathcal{L}(\mathcal{A}_{\mathcal{S}}^\complement)$ 
defines an element in $\mcc^\perp= \operatorname{D}_{\mathcal{S}} C(\mathcal{S},\mathcal{A}_{\mathcal{S}}^\complement)$. Therefore, we obtain the $t$ equations
\begin{equation}\label{21.06.15}
\lambda_{\bm{s}^*}p_{i}(\bm{s}^*)f(\bm{s}^*)= -\sum_{\mathcal{S} \setminus\{\bm{s}^*\}}
\lambda_{\bm{s}} p_{i}(\bm{s})f(\bm{s}),\quad i\in[t].
\end{equation}
As $p_i(\bm{s}^*)=z_i,$ applying the trace function to both sides of previous equations and employing the linearity of the trace function, we obtain
\[\tr \left(z_i \lambda_{\bm{s}^*} f(\bm{s}^*)\right)= 
-\sum_{\mathcal{S} \setminus\{\bm{s}^*\}}
\tr \left( \lambda_{\bm{s}} p_{i}(\bm{s})f(\bm{s}) \right),\quad i\in[t].\]
Define the set $ \Gamma = \{(s_1,\ldots, s_m)\in \mathcal{S} : s_j = s_j^* \}.$
For $ \bm{s} \in \Gamma,$ we have that $p_i(\bm{s})=z_i.$
For $ \bm{s} \in \mathcal{S} \setminus \Gamma,$ $p_i(\bm{s})=\displaystyle \frac{ \tr(z_i(s_j - s^*_j))}{(s_j - s^*_j)}.$ Therefore, we obtain that for $i\in[t],$
\begin{eqnarray*}
\sum_{\mathcal{S} \setminus\{\bm{s}^*\}} \tr \left( \lambda_{\bm{s}} p_{i}(\bm{s})f(\bm{s}) \right)
&=& \sum_{\Gamma \setminus\{\bm{s}^*\}} \tr \left( \lambda_{\bm{s}} p_{i}(\bm{s})f(\bm{s}) \right) + 
\sum_{\mathcal{S} \setminus \Gamma} \tr \left( \lambda_{\bm{s}} p_{i}(\bm{s})f(\bm{s}) \right) \\
&=& \sum_{\Gamma \setminus\{\bm{s}^*\}} \tr \left( \lambda_{\bm{s}} z_i f(\bm{s}) \right) + 
\sum_{\mathcal{S} \setminus \Gamma} \tr \left( \lambda_{\bm{s}} \frac{\tr(z_i(s_j - s^*_j))}{(s_j - s^*_j)}f(\bm{s}) \right)\\
&=& \sum_{\Gamma \setminus\{\bm{s}^*\}} \tr \left( \lambda_{\bm{s}} z_i f(\bm{s}) \right) + 
\sum_{\mathcal{S} \setminus \Gamma} \tr(z_i(s_j - s^*_j)) \tr \left( \frac{\lambda_{\bm{s}} f(\bm{s})}{(s_j - s^*_j)} \right).
\end{eqnarray*}
By Remark~\ref{Dual Bases Remark}, $\lambda_{\bm{s}^*} f(\bm{s}^*),$ and as a consequence, $f(\bm{s}^*)$, can be recovered from its $t$ independent traces $\tr (z_i \lambda_{\bm{s}^*} f(\bm{s}^*)),$ which can be obtained by downloading:
\begin{itemize}
\item  $t$ subsymbols $\tr\left(\lambda_{\bm{s}}z_1 f(\bm{s})\right), \ldots, \tr\left(\lambda_{\bm{s}}z_t f(\bm{s})\right),$ for each $\bm{s}\in \Gamma\setminus\{\bm{s}^*\}.$
\item subsymbol $\displaystyle \tr\left(\frac{\lambda_{\bm{s}} f(\bm{s})}{ (s_j - s^*_j)}\right),$ for each $\bm{s}\in \mathcal{S}\setminus \Gamma.$
\end{itemize}
Hence, the bandwidth is $\displaystyle b = t (|\Gamma|-1) + |\mathcal{S}\setminus \Gamma|=
(t-1)\left(\frac{n}{n_j}-1\right) + n-1.$
\end{proof}

\begin{corollary} \label{21.05.25}
There exist repair schemes for one erasure of $\acar(\mathcal{S},\bm{k})$ and $\acarr(\mathcal{S},\bm{k}),$ each with bandwidth at most
\[b = \prod_{i = 1}^m n_i - 1 + (t-1) \left(\prod_{i = 1}^{m-1} n_i - 1\right).\]
\end{corollary}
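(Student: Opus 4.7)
The plan is to invoke Theorem~\ref{21.06.17} for each of the two augmented families. To do so, two conditions must be checked: (i) $\acar(\mathcal{S},\bm{k})$ and $\acarr(\mathcal{S},\bm{k})$ are decreasing monomial-Cartesian codes, and (ii) there is an index $j$ with $\mathcal{A} \cap L_j = \emptyset$, where $L_j = \{\bm{a} : n_j - q^{t-1} \leq a_j < n_j,\ a_i = n_i - 1 \text{ for } i \neq j\}$. Once these are established, the bandwidth bound comes directly from the theorem, and the stated expression follows by choosing the index $j$ that maximizes $n_j$, namely $j = m$ under the ordering $n_1 \le \cdots \le n_m$.

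For (i), I would argue directly from the definitions of $\mathcal{A}_{Car1}(\bm{k})$ and $\mathcal{A}_{Car2}(\bm{k})$. In the $\acar$ case, if $\bm{a}\in \mathcal{A}_{Car1}(\bm{k})$ then some coordinate satisfies $a_i < k_i$; for any divisor $\bm{a}'$ of $\bm{a}$ we have $a_i' \le a_i < k_i$, so $\bm{a}'\in \mathcal{A}_{Car1}(\bm{k})$. For $\acarr$, suppose $\bm{a}\in \mathcal{A}_{Car2}(\bm{k})$ and $\bm{a}'$ divides $\bm{a}$. If $\bm{a}'$ belonged to some $L_j$ from the definition of ACar2, then $a_i' = n_i-1$ for $i\ne j$ would force $a_i = n_i-1$ as well, and $a_j \ge a_j' \ge k_j$, putting $\bm{a}$ in the same $L_j$ and contradicting $\bm{a}\in \mathcal{A}_{Car2}(\bm{k})$.

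For (ii), the hypothesis $k_i \le n_i - q^{t-1}$ is exactly what is needed. Any point $\bm{a}$ in the $L_j$ of Theorem~\ref{21.06.17} satisfies $a_i = n_i-1 \ge k_i$ for $i\ne j$ and $a_j \ge n_j - q^{t-1}\ge k_j$, so $\bm{a}\in \prod_i\{k_i,\ldots,n_i-1\}$; hence $\bm{a}\notin \mathcal{A}_{Car1}(\bm{k})$. In the $\acarr$ case, the same bounds place $\bm{a}$ in the $L_j$ that defines ACar2, so $\bm{a}$ is excluded from $\mathcal{A}_{Car2}(\bm{k})$. Thus (ii) holds for every $j\in[m]$.

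The main obstacle is essentially bookkeeping: keeping the two uses of the symbol $L_j$ (the one in Theorem~\ref{21.06.17} and the one in the definition of $\acarr$) distinct, and picking $j$ to minimize the bandwidth. Selecting $j=m$ gives $n/n_j = \prod_{i=1}^{m-1} n_i$, and the bandwidth formula from Theorem~\ref{21.06.17} yields
\[
b \;=\; \prod_{i=1}^{m} n_i - 1 + (t-1)\!\left(\prod_{i=1}^{m-1} n_i - 1\right),
\]
as claimed.
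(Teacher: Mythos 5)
Your proposal is correct and follows essentially the same route as the paper: verify that the hypothesis $\mathcal{A}\cap L_j=\emptyset$ of Theorem~\ref{21.06.17} holds (the paper checks only $j=m$, using $k_i\le n_i-q^{t-1}$ exactly as you do) and then read off the bandwidth with $n/n_m=\prod_{i=1}^{m-1}n_i$. Your explicit verification that $\mathcal{A}_{Car1}(\bm{k})$ and $\mathcal{A}_{Car2}(\bm{k})$ are closed under divisibility is a detail the paper leaves implicit, but it does not change the argument.
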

\begin{proof}
Since $k_i \leq n_i - q^{t - 1}$ for $i\in [m]$, $\mathcal{A}_{Car1}({\bm k})\cap L_m = \emptyset$ and $\mathcal{A}_{Car2}({\bm k})\cap L_m = \emptyset$, where
$L_m= \{(n_1-1,\ldots,n_{m-1}-1,a) : n_m - q^{t - 1} \leq  a < n_m \}.$ Thus, the result follows from Theorem~\ref{21.06.17}. 
\end{proof}
As another consequence from Theorem~\ref{21.06.17}, by taking $\mathcal{S} = K^m$, we obtain a repair scheme for augmented Reed-Muller codes, whose family was first introduced in~\cite[Theorem 2.5]{LMV}.
\begin{corollary}\label{21.01.07}
There exists a repair scheme for one erasure for $\agrm(K^m,k)$ and for $\agrmm(K^m,k),$ each with bandwidth
\[b = |K|^m - 1 + (t-1)(|K|^{m-1}-1).\]
\end{corollary}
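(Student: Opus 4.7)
The plan is to deduce this as a direct specialization of Corollary~\ref{21.05.25} (which itself is a specialization of Theorem~\ref{21.06.17}). By the definitions given in Section~3, the augmented Reed--Muller code $\agrm(K^m,k)$ is precisely $\acar(K^m,\bm{k})$ with $\mathcal{S} = K^m$ and $\bm{k} = (k,\ldots,k)$, and similarly $\agrmm(K^m,k) = \acarr(K^m,\bm{k})$ for the same choice. The constraint $k \leq q^t - q^{t-1}$ in the definition of augmented Reed--Muller codes matches the requirement $k_i \leq n_i - q^{t-1}$ from the augmented Cartesian setup when $n_i = |K| = q^t$, so the hypotheses of Corollary~\ref{21.05.25} apply.

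The first (and essentially only) step is to substitute $n_i = |K|$ for every $i \in [m]$ into the bandwidth bound
\[ b = \prod_{i=1}^m n_i - 1 + (t-1)\left(\prod_{i=1}^{m-1} n_i - 1\right) \]
from Corollary~\ref{21.05.25}. This immediately yields
\[ b = |K|^m - 1 + (t-1)\bigl(|K|^{m-1} - 1\bigr), \]
which is the claimed bandwidth. Since both $\agrm(K^m,k)$ and $\agrmm(K^m,k)$ fall under Corollary~\ref{21.05.25}, the same scheme and bandwidth bound applies to both, completing the proof.

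There is no real obstacle here: the corollary is purely a transcription of the general result to the Reed--Muller setting, and the only thing worth verifying is that the hypothesis $\mathcal{A} \cap L_m = \emptyset$ used in Theorem~\ref{21.06.17} continues to hold, which was already established inside the proof of Corollary~\ref{21.05.25}.
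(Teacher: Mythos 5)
Your proposal is correct and matches the paper's own (very brief) proof, which simply observes that the corollary follows from Corollary~\ref{21.05.25} by taking $\mathcal{S}=K^m$, so that $n_i=|K|$ for all $i$ and the bandwidth formula specializes as you compute. Your added check that $k\leq q^t-q^{t-1}$ corresponds to $k_i\leq n_i-q^{t-1}$ is a correct and harmless elaboration of the same argument.
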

\rmv{\begin{proof}
This comes from Corollary~\ref{21.05.25}, when $\mathcal{S} = K^m.$
\end{proof}}
\begin{remark}\rm\label{21.06.26}
The bandwidth of the repair scheme developed in Corollary~\ref{21.01.07} for augmented Reed-Muller codes is less than the one developed in~\cite[Theorems 2.5 and 3.4]{LMV}. This is due to the fact that the repair polynomials used in the proofs of \cite[Theorems 2.5 and 3.4]{LMV} have more zeros over $\mathcal{S}$ than the repair polynomials of the proof of Corollary~\ref{21.01.07}. Thus, the number of subsymbols that are needed to repair an erasure is less when we use Corollary~\ref{21.01.07}. \rmv{See also Remarks~\ref{21.06.27} and \ref{21.06.28}.}
\end{remark}
\section{Two Erasures Repair Schemes}\label{two}
In this section, we keep the same notation as in previous sections and develop a repair scheme that repairs two simultaneous erasures $f(\bm{s^{\prime}})$ and $f(\bm{s}^*)$ of $\mcc$ provided the erasure positions satisfy the property that $s^*_j \neq s^{\prime}_j.$ Then we give a repair scheme that repairs two simultaneous erasures of the augmented Cartesian and Reed-Muller codes that does not require that the position vectors $\bm{s^{\prime}}$ and $\bm{s}^*$ are different on a specific component.

\rmv{We continue with the same notation than previous sections, in particular, $q$ represents a power of a prime $p$, $\mathbb{F}_q$ is a finite field with $q$ elements and $K=\mathbb{F}_{q^t}$ a field extension of degree $t = [K : \mathbb{F}_q].$ We also have that $\mathcal{S}=S_1\times\cdots\times S_m \subseteq K^m$ is a Cartesian product, $n_i$ denotes $|S_i|,$ and $n$ denotes $|\mathcal{S}|=\prod _{i=1}^m n_i.$ We are assuming that $n_1\leq \cdots \leq n_m,$ that the degree of each monomial $M\in \polyset$ in $x_i$ is less than $n_i,$ and that the monomial set $\polyset$ is closed under divisibility, in other words, $\polyset$ satisfies the property that if $M\in \polyset$ and $M^\prime$ divides $M,$ then $M^\prime \in \polyset.$}

\begin{theorem}\label{21.06.16}
Let $\mcc$ be a decreasing monomial-Cartesian code of length $n$ such that there exists $j\in [n]$ with $\mathcal{A} \cap L_j = \emptyset$. Let $\bm{s}^*=(s^*_1,\ldots,s^*_m), \bm{s^{\prime}}=(s^{\prime}_1,\ldots,s^{\prime}_m) \in \mathcal{S}$ such that $s^*_j \neq s^{\prime}_j.$
 There exists a repair scheme for the two simultaneous erasures $f(\bm{s^{\prime}})$ and $f(\bm{s}^*)$ with bandwidth at most
\[b = 2\left[ n-2 + (t-1)\left(\frac{n}{n_j}-2\right)\right].\]
\end{theorem}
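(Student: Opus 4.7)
The plan is to extend the single-erasure strategy of Theorem~\ref{21.06.17} by introducing a pair of polynomial families---one for each erasure---and coupling the two repair systems through their cross-erasure contributions. Fix a basis $\{z_1,\dots,z_t\}$ of $K$ over $\F_q$ together with its dual basis $\{z_1',\dots,z_t'\}$ from Remark~\ref{Dual Bases Remark}. For each $i\in[t]$ set
\[
p_i(\bm{x})=\frac{\tr(z_i(x_j-s_j^*))}{x_j-s_j^*}\qquad\text{and}\qquad q_i(\bm{x})=\frac{\tr(z_i(x_j-s_j'))}{x_j-s_j'}.
\]
By the argument of Theorem~\ref{21.06.17}, both families lie in $\mathcal{L}((L_j)^\complement_{\mathcal{S}})\subseteq \mathcal{L}(\mathcal{A}^\complement_{\mathcal{S}})$, so each polynomial induces a dual codeword of $\mcc$ via the residue map, yielding $2t$ parity-check relations.

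Because $s_j^*\neq s_j'$, the cross-erasure values $p_i(\bm{s}')$ and $q_i(\bm{s}^*)$ each factor as an $\F_q$-scalar times a single element of $K$. Applying the trace to the $i$-th $p$-relation isolates $\sigma_i:=\tr(z_i\lambda_{\bm{s}^*}f(\bm{s}^*))$ plus a term $a_i\mu$, where $a_i=\tr(z_i(s_j'-s_j^*))\in\F_q$ and $\mu=\tr(\lambda_{\bm{s}'}f(\bm{s}')/(s_j'-s_j^*))\in\F_q$; dually, each $q$-relation produces $\tau_i:=\tr(z_i\lambda_{\bm{s}'}f(\bm{s}'))$ plus a term $-a_i\nu$ with $\nu=\tr(\lambda_{\bm{s}^*}f(\bm{s}^*)/(s_j^*-s_j'))$. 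Expanding $\mu,\nu$ through the dual basis yields $\mu=\sum_k\gamma_k\tau_k$ and $\nu=\sum_k\gamma_k\sigma_k$ with $\gamma_k=\tr(z_k'/(s_j'-s_j^*))$, so the $2t$ traces become the $\F_q$-linear system
\[
\begin{pmatrix} I & A \\ -A & I \end{pmatrix}\begin{pmatrix}\sigma\\ \tau\end{pmatrix}=-\begin{pmatrix}R\\ R'\end{pmatrix},\qquad A_{ik}=a_i\gamma_k,
\]
whose right-hand sides $R_i,R_i'$ are sums of downloaded $\F_q$-subsymbols indexed by $\bm{s}\in\mathcal{S}\setminus\{\bm{s}^*,\bm{s}'\}$. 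Solving this system recovers $\lambda_{\bm{s}^*}f(\bm{s}^*)$ and $\lambda_{\bm{s}'}f(\bm{s}')$, and hence both erasures.

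For the bandwidth estimate, let $\Gamma^*=\{\bm{s}\in\mathcal{S}:s_j=s_j^*\}$ and $\Gamma'=\{\bm{s}\in\mathcal{S}:s_j=s_j'\}$, two disjoint sets of size $n/n_j$. At $\bm{s}\in\Gamma^*\setminus\{\bm{s}^*\}$ the identity $p_i(\bm{s})=z_i$ forces the $p$-family to require the $t$ subsymbols $\tr(\lambda_{\bm{s}}z_if(\bm{s}))$, equivalent to the full symbol $\lambda_{\bm{s}}f(\bm{s})$; the single subsymbol needed by the $q$-family is then an $\F_q$-linear combination of these $t$ traces, and is automatically available. A symmetric count applies on $\Gamma'\setminus\{\bm{s}'\}$. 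For $\bm{s}\notin\Gamma^*\cup\Gamma'$ both $p_i(\bm{s})$ and $q_i(\bm{s})$ factor as an $\F_q$-scalar times an $i$-independent element of $K$, so only two subsymbols per such position are needed (one per family, corresponding to the divisors $1/(s_j-s_j^*)$ and $1/(s_j-s_j')$). Summing these contributions and accounting for the sharing on the two fibers yields the claimed upper bound $b\le 2[n-2+(t-1)(n/n_j-2)]$.

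The principal obstacle is the non-singularity of the block coupling matrix. Since $A=a\gamma^T$ has rank one, a Schur-complement computation reduces $\det\bigl(\begin{smallmatrix}I&A\\-A&I\end{smallmatrix}\bigr)$ to $1+(\sum_i\gamma_i a_i)^2$, and the dual-basis identity $\sum_i\tr(z_i\alpha)\tr(z_i'\beta)=\tr(\alpha\beta)$ applied with $\alpha=s_j'-s_j^*$ and $\beta=1/(s_j'-s_j^*)$ forces $\sum_i\gamma_ia_i=\tr(1)=t$ in $\F_p$, so the determinant simplifies to $1+t^2\in\F_p$. Ensuring this is nonzero---and, in the degenerate characteristics where $t^2\equiv -1\pmod p$, refining the basis or replacing each $p_i$ by $p_i(\bm{x})(x_j-s_j')/(s_j^*-s_j')$ when the additional monomials remain in $\mathcal{A}^\complement_{\mathcal{S}}$---is the technical core of the proof, the subsymbol bookkeeping being routine given the fiber structure described above.
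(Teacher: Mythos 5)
Your overall architecture (two families of dual codewords $p_i,q_i$ built from the trace, one anchored at each erasure, plus bookkeeping over the two fibers $\Gamma^*,\Gamma'$) matches the paper's, but the way you resolve the coupling between the two erasures has a genuine gap. With a generic basis $\{z_1,\dots,z_t\}$ you correctly arrive at a $2t\times 2t$ system whose coupling block is the rank-one matrix $A=a\gamma^T$, and whose determinant reduces via $\sum_i a_i\gamma_i=\tr(1)=t$ to $1\pm t^2\bmod p$. (Check your signs: since $q_i(\bm{s}^*)=\tr(z_i(s_j^*-s_j'))/(s_j^*-s_j')=a_i/(s_j'-s_j^*)$, the two blocks carry the \emph{same} sign and the determinant is $1-t^2$, which vanishes whenever $t\equiv\pm1\pmod p$ --- e.g.\ every odd $t$ in characteristic $2$, and every $t$ not divisible by $3$ in characteristic $3$.) Either way the system is singular for a large set of $(p,t)$, and your two escape routes do not close the gap: ``refining the basis'' is exactly the missing idea, and replacing $p_i$ by $p_i(\bm{x})(x_j-s_j')/(s_j^*-s_j')$ pushes the degree in $x_j$ up to $q^{t-1}$, producing exponents outside $(L_j)^{\complement}_{\mathcal{S}}$ that need not lie in $\mathcal{A}^{\complement}_{\mathcal{S}}$ under the hypothesis $\mathcal{A}\cap L_j=\emptyset$ alone.

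The paper avoids the invertibility question entirely by adapting the basis to the pair of erasures before writing any equations: it takes $\{z_1,\dots,z_{t-1}\}$ to be a basis of $\Delta_j=\ker\bigl(\tr(\,\cdot\,(s_j'-s_j^*))\bigr)$, extends by one $z_t$, and premultiplies each $p_i$ by a fixed $\tau\in\ker(\tr)$. Then $p_i(\bm{s}')=q_i(\bm{s}^*)=0$ for all $i\le t-1$, so $2(t-1)$ of the $2t$ equations decouple outright; the single surviving cross-term $\tr(\lambda_{\bm{s}'}p_t(\bm{s}')f(\bm{s}'))$ factors through $\tau/(s_j'-s_j^*)\in\Delta_j$ and is therefore a known $\F_q$-combination of the traces $\tr(\lambda_{\bm{s}'}z_if(\bm{s}'))$, $i\le t-1$, already delivered by the $q$-equations. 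This makes the system triangular rather than merely ``generically invertible,'' and it works in every characteristic. To repair your argument you would need to import exactly this basis choice (or otherwise prove non-singularity), so as written the proposal does not establish the theorem. Your bandwidth accounting is also only heuristic at the end --- the per-fiber tally $2t(n/n_j-1)+2(n-2n/n_j)$ does not literally equal the stated bound except when $n/n_j=t$ --- but that is a secondary issue of the same looseness present in the paper's own count.
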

\begin{proof}
Assume that the entries $f(\bm{s^{\prime}})$ and $f(\bm{s}^*)$ of the codeword $(f(\bm{s}_1),\ldots,f(\bm{s}_n))\in \mcc$ have been erased. By Remark~\ref{Trace Ker}, $\Delta_j:= \left\{\alpha \in K : \tr \left(\alpha (s^{\prime}_j - s^*_j  \right) = 0 \right\}$ has dimension $t-1$ as $\mathbb{F}_q$-vector space. Let $\{z_1, \dots, z_{t - 1}\}$ be an $\mathbb{F}_q$-basis for $\Delta_{j}$ and $z_t$ an element in $K$ such that $\{z_1, \dots, z_{t - 1}, z_t\}$ is an $\mathbb{F}_q$-basis for $K.$ Finally, let $\tau$ be an element of $\ker (\tau).$ 
We are ready to define the repair polynomials. Take
\[p_i(\bm{x}) = \tau\frac{\tr \left(z_i (x_j - s^*_j )\right)}{ \left(x_j - s^*_j \right)} \quad \text{ and } \quad
q_i(\bm{x}) = \frac{\tr\left(z_i (x_j - s^{\prime}_j )\right)}{ (x_j - s^{\prime}_j )}, \quad i \in [t]. \]
As $\mathcal{A} \cap L_j = \emptyset,$ the polynomials $p_i(\bm{x})$ and $q_i(\bm{x})$ define elements in the dual code $\mcc^\perp.$ Therefore, in a similar way to the proof of Theorem~\ref{21.06.17}, we obtain the following $2t$ equations:
\begin{align}
\lambda_{\bm{s}^*}p_{i}(\bm{s}^*)f(\bm{s}^*)+\lambda_{\bm{s^{\prime}}}p_{i}(\bm{s^{\prime}})f(\bm{s^{\prime}})&=\label{Eq.21.06.11_1}
-\sum_{\bm{s} \in \mathcal{S} \setminus\{\bm{s}^*, \bm{s^{\prime}}\}} \lambda_{\bm{s}} p_{i}(\bm{s})f(\bm{s}),\quad i\in[t],\\
\lambda_{\bm{s}^*}q_{i}(\bm{s}^*)f(\bm{s}^*)+\lambda_{\bm{s^{\prime}}}q_{i}(\bm{s^{\prime}})f(\bm{s^{\prime}})&=\label{Eq.21.06.11_2}
-\sum_{\bm{s} \in \mathcal{S} \setminus\{\bm{s}^*, \bm{s^{\prime}}\}} \lambda_{\bm{s}} q_{i}(\bm{s})f(\bm{s}),\quad i\in[t].
\end{align} 
By definition of the $p_i$'s and $q_i$'s, $p_i(\bm{s}^*)= \tau z_i$ and $q_i(\bm{s^{\prime}}) = z_i$ for $i \in [t].$ As $\{z_1, \dots, z_{t - 1}\}$ is an $\mathbb{F}_q$-basis for $\Delta_j$, $p_i(\bm{s^{\prime}}) = q_i(\bm{s}^*) = 0$ for $i \in [t-1],$ thus Equations~\ref{Eq.21.06.11_1} and \ref{Eq.21.06.11_2} become
\begin{eqnarray}
\lambda_{\bm{s}^*}\tau z_i f(\bm{s}^*)&=&\label{Eq.21.06.11_3}
-\sum_{\bm{s} \in \mathcal{S} \setminus\{\bm{s}^*, \bm{s^{\prime}}\}} \lambda_{\bm{s}} p_{i}(\bm{s})f(\bm{s}),\quad i\in[t-1],\\
\lambda_{\bm{s}^*}\tau z_t f(\bm{s}^*)+\lambda_{\bm{s^{\prime}}} \, p_{t}(\bm{s^{\prime}})f(\bm{s^{\prime}})&=&\label{Eq.21.06.11_4}
-\sum_{\bm{s} \in \mathcal{S} \setminus\{\bm{s}^*, \bm{s^{\prime}}\}} \lambda_{\bm{s}} p_{t}(\bm{s})f(\bm{s}),\\
\lambda_{\bm{s^{\prime}}} \, z_i f(\bm{s^{\prime}})&=&\label{Eq.21.06.11_5}
-\sum_{\bm{s} \in \mathcal{S} \setminus\{\bm{s}^*, \bm{s^{\prime}}\}} \lambda_{\bm{s}} q_{i}(\bm{s})f(\bm{s}),\quad i\in[t-1],\\
\lambda_{\bm{s}^*}q_{t}(\bm{s}^*) f(\bm{s}^*)+\lambda_{\bm{s^{\prime}}} \, z_t f(\bm{s^{\prime}})&=&\label{Eq.21.06.11_6}
-\sum_{\bm{s} \in \mathcal{S} \setminus\{\bm{s}^*, \bm{s^{\prime}}\}} \lambda_{\bm{s}} q_{t}(\bm{s})f(\bm{s}).
\end{eqnarray}
Observe that
\begin{eqnarray*}
\tr \left( \lambda_{\bm{s^{\prime}}} \, p_t( \bm{s^{\prime}} ) f( \bm{s^{\prime}} ) \right) & =  &
\tr \left( \lambda_{\bm{s^{\prime}}} \, \tau \frac{\tr \left(z_t (s^{\prime}_j - s^*_j )\right)}{(s^{\prime}_j - s^*_j )} f( \bm{s^{\prime}} ) \right)\\ 
& = &
 \tr \left(z_t (s^{\prime}_j - s^*_j )\right)
 \tr \left( \lambda_{\bm{s^{\prime}}} \, \frac{\tau}{(s^{\prime}_j - s^*_j ) } f( \bm{s^{\prime}} ) \right).
\end{eqnarray*}
As $\displaystyle \frac{\tau}{(s^{\prime}_j - s^*_j ) } \in \Delta_j,$ whose $\mathbb{F}_q$-basis is $\{z_1,\ldots, z_{t-1} \},$ there exist $\alpha_1, \ldots, \alpha_{t-1} \in \mathbb{F}_q$ such that previous equations imply that
\[\tr \left( \lambda_{\bm{s^{\prime}}} \, p_t( \bm{s^{\prime}} ) f( \bm{s^{\prime}} ) \right) =
 \tr \left(z_t (s^{\prime}_j - s^*_j )\right)
\sum_{i=1}^{t-1} \alpha _ i \tr \left( \lambda_{\bm{s^{\prime}}} \, z_i f( \bm{s^{\prime}} ) \right).\]
By Remark~\ref{Dual Bases Remark}, the element $f(\bm{s}^*)$ can be recovered from the $t$ traces $ \tr(\lambda_{\bm{s}^*}\tau z_i f(\bm{s}^*)).$ Thus, from last equation, and applying the trace function to both sides of Equations~\ref{Eq.21.06.11_3},~\ref{Eq.21.06.11_4} and \ref{Eq.21.06.11_5}, we get that the traces
$\tr(\lambda_{\bm{s}^*}\tau z_i f(\bm{s}^*)),$ for $i\in [t],$ can be obtained by downloading for every $\bm{s} \in \mathcal{S} \setminus\{\bm{s}^*, \bm{s^{\prime}}\},$ the elements 
$\tr (\lambda_{\bm{s}} p_{i}(\bm{s})f(\bm{s}))$ for $i\in [t],$ and $\tr (\lambda_{\bm{s}} q_{i}(\bm{s})f(\bm{s}))$ for $i\in[t-1].$
Finally, as $f(\bm{s}^*)$ has been already recovered, from Equation~\ref{Eq.21.06.11_6}, we can obtain $\tr (\lambda_{\bm{s^{\prime}}} \, z_t f(\bm{s^{\prime}})),$ and as a
consequence $f(\bm{s^{\prime}}),$ by downloading for every $\bm{s} \in \mathcal{S} \setminus\{\bm{s}^*, \bm{s^{\prime}}\},$ the elements $\tr (\lambda_{\bm{s}} q_{t}(\bm{s})f(\bm{s})).$

Therefore, both erasures $f(\bm{s^{\prime}})$ and $f(\bm{s}^*)$ can be recovered by downloading for every $\bm{s} \in \mathcal{S} \setminus\{\bm{s}^*, \bm{s^{\prime}}\},$ the elements 
$\tr (\lambda_{\bm{s}} p_{i}(\bm{s})f(\bm{s}))$ and $\tr (\lambda_{\bm{s}} q_{i}(\bm{s})f(\bm{s}))$ for $i\in[t].$ The bandwidth is a consequence of the proof of Theorem~\ref{21.06.17}, considering that now we need to download twice the information about $n-2$ elements, instead of only $n-1$ as in Theorem~\ref{21.06.17}.
\end{proof}

\begin{theorem}\label{21.06.18}
There exists a repair scheme for $\acar(\mathcal{S},\bm{k})$ that repairs two simultaneous erasures $f(\bm{s^{\prime}})$ and $f(\bm{s}^*)$ with bandwidth at most
\[b = 2\left[ \prod_{i = 1}^m n_i - 1 + (t-1) \left(\prod_{i = 1}^{m-1} n_i - 1\right) \right].\]
\end{theorem}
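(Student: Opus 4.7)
The plan is to split the argument into two cases based on whether $s^*_m = s^{\prime}_m$. If $s^*_m \neq s^{\prime}_m$, I would invoke Theorem~\ref{21.06.16} directly with $j = m$: the hypothesis $\mathcal{A}_{Car1}(\bm{k}) \cap L_m = \emptyset$ follows from $k_m \leq n_m - q^{t-1}$ exactly as in Corollary~\ref{21.05.25}, and the resulting bandwidth $2[n - 2 + (t-1)(\prod_{i=1}^{m-1} n_i - 2)]$ is strictly smaller than the stated bound, so this case is routine.

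The substantive case is $s^*_m = s^{\prime}_m$. Since $\bm{s}^* \neq \bm{s^{\prime}}$, there exists some $j \in [m-1]$ with $s^*_j \neq s^{\prime}_j$. I would run a two-step scheme. In Step A, I reuse the polynomials $p_i(\bm{x}) = \tr(z_i(x_m - s^*_m))/(x_m - s^*_m)$ from Theorem~\ref{21.06.17}, where $\{z_1,\dots,z_t\}$ is an $\mathbb{F}_q$-basis of $K$; these lie in $\mathcal{L}(\mathcal{A}_{Car1}(\bm{k})^\complement_{\mathcal{S}})$ by Proposition~\ref{21.05.26}, since their monomials in $x_m$ have degree at most $q^{t-1} - 1$ and $k_m \leq n_m - q^{t-1}$. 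Because $s^*_m = s^{\prime}_m$ forces $p_i(\bm{s}^*) = p_i(\bm{s^{\prime}}) = z_i$, the dual-code relations collapse into
\[ \tr\bigl(z_i\bigl(\lambda_{\bm{s}^*} f(\bm{s}^*) + \lambda_{\bm{s^{\prime}}} f(\bm{s^{\prime}})\bigr)\bigr) = -\sum_{\bm{s} \in \mathcal{S} \setminus \{\bm{s}^*, \bm{s^{\prime}}\}} \tr\bigl(\lambda_{\bm{s}} p_i(\bm{s}) f(\bm{s})\bigr), \quad i \in [t], \]
and Remark~\ref{Dual Bases Remark} will recover only the single $K$-element $\sigma := \lambda_{\bm{s}^*} f(\bm{s}^*) + \lambda_{\bm{s^{\prime}}} f(\bm{s^{\prime}})$.

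In Step B, I would break the symmetry using $q_i(\bm{x}) = (x_j - s^{\prime}_j)\, p_i(\bm{x})$. Each $q_i$ still lies in $\mathcal{L}(\mathcal{A}_{Car1}(\bm{k})^\complement_{\mathcal{S}})$, because the new monomials $x_m^a$ and $x_j x_m^a$ (with $a \leq q^{t-1} - 1$) meet both constraints $k_m \leq n_m - q^{t-1}$ and $k_j \leq n_j - 2$ (the latter follows from $k_j \leq n_j - q^{t-1}$ provided $t \geq 2$; the case $t = 1$ is trivial). Then $q_i(\bm{s^{\prime}}) = 0$ and $q_i(\bm{s}^*) = (s^*_j - s^{\prime}_j) z_i$, and since $\{(s^*_j - s^{\prime}_j) z_i\}_i$ is still an $\mathbb{F}_q$-basis of $K$, Remark~\ref{Dual Bases Remark} delivers $\lambda_{\bm{s}^*} f(\bm{s}^*)$ from the corresponding $t$ traces. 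Then $\lambda_{\bm{s^{\prime}}} f(\bm{s^{\prime}}) = \sigma - \lambda_{\bm{s}^*} f(\bm{s}^*)$ recovers the second erasure.

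For the bandwidth analysis, with $\Gamma = \{\bm{s} \in \mathcal{S} : s_m = s^*_m\}$ of size $\prod_{i=1}^{m-1} n_i$, Step A needs $t$ subsymbols per point in $\Gamma \setminus \{\bm{s}^*, \bm{s^{\prime}}\}$ and one per point in $\mathcal{S} \setminus \Gamma$; Step B's subsymbols from $\Gamma$ will be $\mathbb{F}_q$-linear combinations of the Step A downloads (which already determine each $\lambda_{\bm{s}} f(\bm{s})$ in full), hence free, and it adds at most one extra subsymbol per point in $\mathcal{S} \setminus \Gamma$. The total $t(\prod_{i=1}^{m-1} n_i - 2) + 2(n - \prod_{i=1}^{m-1} n_i)$ sits comfortably inside the stated bound $2[n - 1 + (t-1)(\prod_{i=1}^{m-1} n_i - 1)]$. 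I expect the main obstacle to be verifying dual-code membership of $q_i$: multiplying $p_i$ by $(x_j - s^{\prime}_j)$ only stays in the dual because the complement $\mathcal{A}_{Car1}(\bm{k})^\complement_{\mathcal{S}}$ from Proposition~\ref{21.05.26} is the full Cartesian box $\prod_i \{0, \dots, n_i - 1 - k_i\}$, which is precisely the structural advantage of the augmented Cartesian family over a general $\dmcc$.
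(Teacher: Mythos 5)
Your proposal is correct, but in the case that actually requires work it takes a genuinely different route from the paper. The paper's proof is a one-line reduction: since the definition of $\acar(\mathcal{S},\bm{k})$ imposes $k_i \le n_i - q^{t-1}$ for \emph{every} coordinate $i$, the hypothesis $\mathcal{A}_{Car1}(\bm{k}) \cap L_j = \emptyset$ of Theorem~\ref{21.06.16} holds for every $j\in[m]$ (the whole corner box $\prod_i\{k_i,\dots,n_i-1\}\supseteq L_j$ is removed), so one simply applies that theorem with any coordinate $j$ in which $\bm{s}^*$ and $\bm{s^{\prime}}$ differ. You verified the hypothesis only for $j=m$, and so built a separate two-step scheme when $s^*_m=s^{\prime}_m$: recover $\sigma=\lambda_{\bm{s}^*}f(\bm{s}^*)+\lambda_{\bm{s^{\prime}}}f(\bm{s^{\prime}})$ with the single-erasure polynomials, then break the symmetry with $q_i=(x_j-s^{\prime}_j)p_i$. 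This is sound: the dual-membership of $q_i$ does hold precisely because $\mathcal{A}_{Car1}^\perp(\bm{k})$ is a full box (your identified ``main obstacle'' is exactly the right thing to check, and would fail for a general $\dmcc$), and your count $t(P-2)+2(n-P)$ with $P=\prod_{i=1}^{m-1}n_i$ is below the stated bound by exactly $tP$. Worth noting what each route buys: the paper's argument is far shorter, but when the erasures differ only in a coordinate $j<m$ with $n_j<n_m$ it yields bandwidth $2[\,n-2+(t-1)(\prod_{i\ne j}n_i-2)]$, which can exceed the bound displayed in the theorem (e.g.\ $q=2$, $t=3$, $n_1=4$, $n_2=8$ gives $84$ versus the claimed $74$), whereas your scheme meets the stated bound uniformly. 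The only loose end in your write-up is $t=1$, which you reasonably set aside as degenerate.
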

\begin{proof}
As $\bm{s}^* \neq \bm{s^{\prime}},$ there is $j\in [m]$ such that $s_j^* \neq s^{\prime}_j.$ The condition $k_i \leq n_i - q^{t - 1}$ on the definition of augmented Cartesian code 1 implies that $\mathcal{A}_{Car1}({\bm k}) \cap L_j = \emptyset,$ where $L_j= \{(a_1,\ldots,a_m) : n_j - q^{t - 1} \leq  a_j < n_j, a_i = n_{i-1}-1 \text{ for } i\neq j \}.$ Thus, the result follows from the proof of Theorem~\ref{21.06.16} and the fact that the length of the augmented Cartesian code 1, $n=\prod_{i = 1}^m n_i,$ is given by the cardinality of the Cartesian set $\mathcal{S}.$
\end{proof}

\begin{theorem}
There exists a repair scheme for $\acarr(\mathcal{S},\bm{k})$ that repairs two simultaneous erasures $f(\bm{s^{\prime}})$ and $f(\bm{s}^*)$ with bandwidth at most
\[b = 2\left[ \prod_{i = 1}^m n_i - 1 + (t-1) \left(\prod_{i = 1}^{m-1} n_i - 1\right) \right].\]
\end{theorem}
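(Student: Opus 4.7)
The plan is to mirror the strategy of Theorem~\ref{21.06.18} for the \acar{} case, reducing the statement to an application of Theorem~\ref{21.06.16}. First, I would verify that $\acarr(\mathcal{S},\bm{k})$ is a decreasing monomial-Cartesian code and locate a coordinate that distinguishes the two erased positions. Since $\bm{s}^*\neq \bm{s^{\prime}}$, there exists $j\in[m]$ with $s_j^*\neq s_j^{\prime}$. The length of $\acarr(\mathcal{S},\bm{k})$ is $n=\prod_{i=1}^m n_i$, since it is a monomial-Cartesian code on the Cartesian set $\mathcal{S}$. So the two inputs required by Theorem~\ref{21.06.16} (a distinguishing coordinate $j$ and the length $n$) are in place.

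Next, I would check the combinatorial hypothesis of Theorem~\ref{21.06.16}, namely that $\mathcal{A}_{Car2}(\bm{k})\cap L_j=\emptyset$, where
\[ L_j=\{\bm{a}\in\zm:n_j-q^{t-1}\leq a_j<n_j,\ a_i=n_i-1\text{ for }i\neq j\}\]
is the set appearing in Theorems~\ref{21.06.17} and~\ref{21.06.16}. By definition, $\mathcal{A}_{Car2}(\bm{k})$ excludes the set
\[ L_j^{\mathrm{Car2}}=\{\bm{a}:k_j\leq a_j\leq n_j-1,\ a_i=n_i-1\text{ for }i\neq j\}.\]
The hypothesis $k_j\leq n_j-q^{t-1}$ yields $L_j\subseteq L_j^{\mathrm{Car2}}$, so $\mathcal{A}_{Car2}(\bm{k})\cap L_j=\emptyset$, as required.

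With these observations, Theorem~\ref{21.06.16} applies and produces a repair scheme for the simultaneous erasures $f(\bm{s}^*)$ and $f(\bm{s^{\prime}})$ with bandwidth bounded by
\[b=2\left[n-2+(t-1)\left(\frac{n}{n_j}-2\right)\right].\]
Substituting $n=\prod_{i=1}^m n_i$ and using $n/n_j\leq \prod_{i=1}^{m-1}n_i$ (by our convention $n_1\leq\cdots\leq n_m$, so choosing $j=m$ is best; if the distinguishing coordinate is not $j=m$, we simply bound $n/n_j$ by $\prod_{i=1}^{m-1}n_i$) and the trivial upper bounds $n-2\leq n-1$ and $n/n_j-2\leq \prod_{i=1}^{m-1}n_i-1$ yields the claimed bandwidth
\[ b\leq 2\left[\prod_{i=1}^m n_i-1+(t-1)\left(\prod_{i=1}^{m-1}n_i-1\right)\right].\]

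The only genuinely non-routine step is the inclusion $L_j\subseteq L_j^{\mathrm{Car2}}$, which is where the constraint $k_i\leq n_i-q^{t-1}$ on the augmentation parameters is used, and the reduction to the best coordinate $j$; both are essentially bookkeeping, so I do not expect a substantive obstacle.
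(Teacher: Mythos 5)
Your overall strategy --- check that $\mathcal{A}_{Car2}(\bm{k})$ is closed under divisibility, verify $\mathcal{A}_{Car2}(\bm{k})\cap L_j=\emptyset$ via $L_j\subseteq L_j^{\mathrm{Car2}}$ (which follows from $k_j\le n_j-q^{t-1}$ and holds for \emph{every} $j$, which is what removes the positional restriction of Theorem~\ref{21.06.16}), and then invoke Theorem~\ref{21.06.16} for a coordinate $j$ separating $\bm{s}^*$ from $\bm{s^{\prime}}$ --- is sound, and it is exactly how the paper treats the ACar1 case in Theorem~\ref{21.06.18}. The paper's own proof of the ACar2 statement takes a mildly different route: it rebuilds the repair polynomials with the normalization $(x_j-s_j^*)/(s^{\prime}_j-s_j^*)$, so that the vanishing conditions become $\tr(z_i)=0$ for a fixed basis of $\ker(\tr)$ rather than of the position-dependent kernel $\Delta_j$, and then follows the lines of Theorem~\ref{21.06.16}. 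Your shortcut loses nothing relative to that.

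There is, however, a reversed inequality in your bandwidth bookkeeping. With the convention $n_1\le\cdots\le n_m$ one has $n/n_j=\prod_{i\ne j}n_i\ \ge\ \prod_{i=1}^{m-1}n_i=n/n_m$, with equality only when $n_j=n_m$; so you cannot ``simply bound $n/n_j$ by $\prod_{i=1}^{m-1}n_i$'' when the only coordinate in which $\bm{s}^*$ and $\bm{s^{\prime}}$ differ is some $j$ with $n_j<n_m$. In that situation Theorem~\ref{21.06.16} yields $2\bigl[n-2+(t-1)\bigl(\prod_{i\ne j}n_i-2\bigr)\bigr]$, which can exceed the stated bound: for instance with $m=2$, $n_1=4$, $n_2=8$, $t=3$ and erasures differing only in the first coordinate, the scheme's count is $84$ while the claimed bound is $74$. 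To be fair, the paper's proofs of Theorem~\ref{21.06.18} and of this theorem are silent on the same point, and the issue disappears when all $n_i$ are equal (the Reed--Muller case) or whenever $j=m$ may be chosen; but as written your justification of the inequality is false, so you should either restrict the choice of $j$, assume the erasures differ in the last coordinate, or state the worst-case bound with $\prod_{i=2}^{m}n_i=n/n_1$ in place of $\prod_{i=1}^{m-1}n_i$.
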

\begin{proof}
As $\bm{s}^* \neq \bm{s^{\prime}},$ there is $j\in [m]$ such that $s_j^* \neq s^{\prime}_j.$ By Remark~\ref{Trace Ker}, $\ker (\tr)$ has dimension $t-1$ as $\mathbb{F}_q$-vector space. Let $\{z_1, \dots, z_{t - 1}\}$ be an $\mathbb{F}_q$-basis for $\ker(\tr)$ and $z_t$ an element in $K$ such that $\{z_1, \dots, z_{t - 1}, z_t\}$ is an $\mathbb{F}_q$-basis for $K.$ Then we define the repair polynomials
\[p_i(\bm{x}) = z_1 \frac{\tr \left( \displaystyle \frac{z_i(x_j - s_j^*)}{s^{\prime}_j - s_j^*} \right)}{ \displaystyle \frac{x_j - s_j^*}{s^{\prime}_j - s_j^*}} \quad \text{ and } \quad
q_i(\bm{x}) = \frac{\tr \left( \displaystyle \frac{z_i(x_j - s^{\prime}_j)}{s_j^* - s^{\prime}_j} \right)}{\displaystyle  \frac{x_j - s^{\prime}_j}{s_j^* - s^{\prime}_j}}, \quad i \in [t]. \]
By definition of augmented Cartesian code 2, $k_i \leq n_i - q^{t - 1},$ for $i\in [m],$ thus the polynomials $p_i(\bm{x})$ and $q_i(\bm{x})$ define elements in the dual code $\acarr(\mathcal{S},\bm{k})^\perp.$
Observe that the polynomials $p_i$'s and $q_i$'s have the property that $p_i(\bm{s}^*)= z_1 z_i$ and $q_i(\bm{s^{\prime}}) = z_i$ for $i \in [t].$ By definition of the $z_i$'s, $p_i(\bm{s^{\prime}}) = z_1 \tr(z_i) = 0$ and $q_i(\bm{s}^*) = \tr(z_i) = 0,$ for $i \in [t-1].$ In addition, observe that
\begin{eqnarray*}
\tr \left( \lambda_{\bm{s^{\prime}}} \, p_t( \bm{s^{\prime}} ) f( \bm{s^{\prime}} ) \right) & =  &
\tr \left( \lambda_{\bm{s^{\prime}}} \, z_1 \tr (z_t) f( \bm{s^{\prime}} ) \right)\\ 
& = &
 \tr \left(z_t \right)
 \tr \left( \lambda_{\bm{s^{\prime}}} \, z_1 f( \bm{s^{\prime}} ) \right).
\end{eqnarray*}
Following the lines of the proof of Theorem~\ref{21.06.16}, we obtain that both erasures $f(\bm{s^{\prime}})$ and $f(\bm{s}^*)$ can be recovered by downloading for every $\bm{s} \in \mathcal{S} \setminus\{\bm{s}^*, \bm{s^{\prime}}\},$ the elements 
$\tr (\lambda_{\bm{s}} p_{i}(\bm{s})f(\bm{s}))$ and $\tr (\lambda_{\bm{s}} q_{i}(\bm{s})f(\bm{s}))$ for $i\in[t].$ Therefore, the result follows from the proof of Theorem~\ref{21.06.16} and the fact that the length of the augmented Cartesian code 2, $n=\prod_{i = 1}^m n_i,$ is given by the cardinality of the Cartesian set $\mathcal{S}.$
\end{proof}

\begin{remark}
Given certain circumstances, it is possible to extend the repair scheme described above for two erasures to three erasures and beyond.  This extension may be seen as analogous to the extension to three erasures in the Reed-Solomon case developed in \cite{Dau}. In particular, such a repair scheme for three erasures which all differ on the same coordinate $j$, $\bm{s^*}, \bm{s^{\prime}}, $ and $\bm{\tilde{s}}$ begins with finding the kernels of the following maps:
$$\tr(z(s'_j - s^*_j)), \tr(z(s'_j - \tilde{s_j})), \tr(z(s^*_j - \tilde{s}_j)).$$
Then, similar to the two erasure case, repair polynomials $\{p_1, \dots, p_t\}, \{q_1, \dots, q_t\}, $ and $\{r_1, \dots, r_t\}$ can be constructed which each evaluate to a basis element $\{z_1, \dots, z_t\}$ at the associated erased coordinates $\bm{s^*}, \bm{s^{\prime}}$, and $\bm{\tilde{s}}$ respectively. The basis chosen will be an extension of the basis for intersection of the three kernels.  This choice of basis combined with the properties of the trace function will guarantee that each repair polynomial will evaluate to $0$ at their non-associated erased coordinates, on all but two $i$. However, on these remaining $i$, the repair polynomials will evaluate to an element in the span of the outputs of other repair polynomials.  This will create a system of equations which can be solved given the output of two polynomials on these remaining $i$. For example, $p_{t - 1}(\bm{s^{\prime}})$ and $p_t(\bm{\tilde{s}})$ would be enough given the appropriate repair polynomial definitions.  Under particular circumstances, such as $t \mid \textrm{char}(K)$, these two outputs can be determined from the remaining nodes, and therefore produce $\tr(p_i(s)f(s))$, $\tr(q_i(s)f(s))$, and $\tr(r_i(s)f(s))$ at each erased coordinate for all $i's$.  Then, a typical linear exact repair scheme can proceed from there to fix all three erasures.

\end{remark}

\section{Comparisons and examples}\label{results}
The GW-scheme~\cite[Theorem 1]{GW} has the following parameters on the Reed-Solomon code $\rs(K, k)$:
length $n = |K|,$ dimension $\dim = k$ and bandwidth $b = n-1.$
Proposition~\ref{21.05.26} and Corollary~\ref{21.01.07} give the following parameters for the repair scheme on the augmented Reed-Muller code 1 ($\agrm$-scheme):
length $n = |K|^{m},$ $\dim = n {-} (\displaystyle \sqrt[m]{n} {-} k)^m$ and bandwidth
$b = |K|^m - 1 + (t-1)(|K|^{m-1}-1).$

It is clear that in general, the bandwidth of the $\agrm$-scheme may be much larger than the bandwidth of the GW-scheme, but the dimension and the length are also much larger. We now compare both schemes when the dimension and the base field $\mathbb{F}_q$ are the same. 

Assume $m$ divides $t$ and $t=m{t^*}.$ The GW-scheme and the $\agrm$-scheme repair the codes $\rs(\mathbb{F}_{q^t},k)$ and $\agrm(\mathbb{F}_{q^{t^*}}^m,k)$ when the dimensions are at most $q^t - q^{t-1}$ and $q^t - q^{t-m},$ respectively. An advantage of the $\agrm(\mathbb{F}_{q^{t^*}}^m,k)$ comes when a code with dimension $k^*$ between $q^t - q^{t-1}$ and $q^t - q^{t-m}$ is required. The restriction on the dimension of the GW-scheme implies that to employ an $\rs$ code, it must utilize an alphabet of size $q^{t+1}$ to achieve dimension $k^*.$ However, as the dimension of the code $\agrm(\mathbb{F}_{q^{t^*}}^m,k)$ can be up to $q^t - q^{t-m},$ there are values between $q^t - q^{t-1}$ and $q^t$ where we can still use $\agrm(\mathbb{F}_{q^{t^*}}^m,k)$, whose bandwidth can be lower. We show this in the following example.

\begin{example}\rm\label{21.07.02}
Assume that a code of dimension $k^*=648$ over a field of characteristic $3$ is required. Observe that $3^6 - 3^5 = 486 < k^* <  3^6 = 729.$ Over the field of size $3^6$, there is a Reed-Solomon code with dimension $648,$ but the GW-scheme is not applicable. Indeed, the requirement that the dimension is at most $n - q^{t-1} = 486$ is not satisfied. To resolve this, a larger field such as one of size $3^7=2187$ may be used. Given that the GW-scheme requires the dimension to be at most $n - q^{t-1}$, the RS code's length must then be bounded below by $648 + q^{t-1} = 1377$, meaning the bandwidth is at least 1376. The code $\agrm(\mathbb{F}_{3^{3}}^2,18)$ has dimension $k^*$ and according to Corollary~\ref{21.01.07}, bandwidth $b = |K|^m - 1 + (t-1)(|K|^{m-1}-1)=27^2-1+(2)(27-1)=780.$ As a consequence we obtain the following. Using $\rs$ codes and the GW-scheme, we obtain a code over $F_{2187}$, length $1377$, bandwidth $1376$ and dimension $648.$ Using $\agrm$ and the $\agrm$-scheme from Corollary~\ref{21.01.07}, we obtain a code over $F_{27}$, length $729$, bandwidth $780$ and dimension $648.$

Notice that applying the result in Corollary~\ref{21.01.07} to repair an erasure of $\agrm(\mathbb{F}_{3^{3}}^2,18)$ gives a bandwidth of $780$ whereas using~\cite[Theorem 2.5]{LMV}, the bandwidth is $837$~\cite[Example 4.1]{LMV}.
\end{example}

We can go further. As the following example shows, there are some values between $q^t - q^{t-1}$ and $q^t$ where an augmented Cartesian code can be used, but not an augmented Reed-Muller code.

\begin{example}\rm
Assume that a code of dimension $k^*=621$ over a field of characteristic $3$ is required. Observe that $3^6 - 3^5 = 486 < k^* <  3^6 = 729.$
As we explained on Example~\ref{21.07.02}, we can use a Reed-Solomon code and the GW-scheme, but the RS code's length must then be bounded below by $648 + q^{t-1} = 1377$, meaning the bandwidth is at least 1376.

Next, we consider whether we can use an augmented Reed-Muller code. According to Example~\ref{21.07.02}, the code $\agrm(\mathbb{F}_{3^{3}}^2,18)$ has dimension $648$ and bandwidth $780.$ If we increase $t$ or $m$, the bandwidth will increase. If we decrease $m$ from $2$ to $1,$ we are getting a RS code. So, the only option is to reduce $t.$ Over $\mathbb{F}_{3^2},$ in order to have a dimension $621$, we need $m=3.$ On this case, according to Proposition~\ref{21.05.26}, the dimension of $\agrm(\mathbb{F}_{3^{2}}^3,6)=721.$ By Corollary~\ref{21.01.07}, the bandwidth is $808.$

Now take $q=3, t=3, m=2, S_1= \mathbb{F}_{3^3}$ and $S_2= \mathbb{F}_{3^3}^*=S_1\setminus\left\{0\right\}.$ By Proposition~\ref{21.05.26}, the dimension of $\displaystyle \acar(S_1 \times S_2, (17,18)) = (26)(27) - (9)(9) = 621.$ Using Corollary~\ref{21.05.25}, we obtain that the bandwidth of $\displaystyle \acar(S_1 \times S_2, (17,18))$ is $(26)(27) - 1 (2)(26 - 1) = 621-1+2(25)=670.$

As a summary, if we want a code with dimension $621,$ using a $\rs$ code, we will have bandwidth 1376, using an $\arm$ code will we have bandwidth $780$, and using an $\acar$ we will have bandwidth $670.$
\end{example}

\begin{example}\rm
The $\agrm$-scheme may be compared with other repair schemes in the literature, such as the repair scheme for algebraic geometry codes~\cite{JLX}. By Corollary~\ref{21.01.07}, the augmented code $\agrm(\mathbb{F}_{2^{3}}^3,3)$ has length 512, dimension 448 and bitwidth $\log_{2} q (b) = |K|^m - 1 + (t-1)(|K|^{m-1}-1)= 8^3-1+(2)(8^2-1)=$ 637, whereas the Hermitian code of the same rate in ~\cite[Example 14]{JLX} requires a bitwidth of (3)(511)=1533 to repair an erasure. In addition, the $\agrm(\mathbb{F}_{2^{3}}^3,3)$ code is over $\mathbb{F}_8,$ while the Hermitian code is over $\mathbb{F}_{64}.$ An $\rs$ code of the same length and dimension requires a field of size at least 512 and a bitwidth of 1533.

Note that using Corollary~\ref{21.01.07} to repair an erasure of $\agrm(\mathbb{F}_{2^{3}}^3,3)$ gives a bandwidth of $637$ whereas using~\cite[Theorem 2.5]{LMV} provides the bitwidth is $847$~\cite[Example 4.2]{LMV}.
\end{example}

The $\arm$ codes will have greater repair bandwidth than the $\grm$ codes when $q$ increases. However, the expression of the bandwidth makes it difficult to immediately appreciate the improvement in rate gained by implementing the $\arm$ codes. Figure~\ref{fig:Comparison Plot} graphs the rate versus the repair bandwidth of the repair schemes of $\grm(\mathbb{F}_{5^4}^3, k)$, $\agrm(\mathbb{F}_{5^4}^3, k)$, and $\agrmm(\mathbb{F}_{5^4}^3, k),$ for all values of $k$ where the repair schemes developed in~\cite[Theorem 1]{GW} and Corollary~\ref{21.01.07} can be applied. The same figure demonstrates that $\grm$ codes admit repair schemes with much lower bandwidth than the $\arm$. However, it also reveals that the $\arm$ codes have significantly higher rates, increasing from at most $0.2$ to more than $0.99$.  Actual values can be found in Examples~\ref{5 example} and \ref{6 example}.

\begin{figure}[ht!]
\vskip -.3cm
\begin{center}
\includegraphics[width=8.5cm, height=4.5cm]{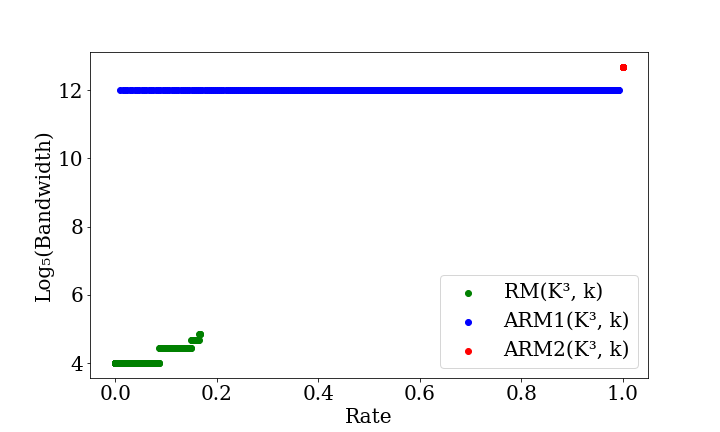}
\end{center}
\caption{Rate versus the repair bandwidth of the repair schemes of $\grm(\mathbb{F}_{5^4}^3, k)$, $\agrm(\mathbb{F}_{5^4}^3, k)$, and $\agrmm(\mathbb{F}_{5^4}^3, k),$ for all values of $k$ where the repair schemes developed in~\cite[Theorem 1]{GW} and Corollary~\ref{21.01.07} can be applied.}
\label{fig:Comparison Plot}
\vskip -.3cm
\end{figure}

\begin{example}\label{5 example}\rm
Let $q = 5$, $t = 4$, and $m = 3.$ The maximum $k$ for which $\grm(K^m, k)$ admits the repair scheme given in~\cite[Theorem III.1]{RM} is $623.$ The maximum $k$ for which $\agrm(K^m, k)$ and $\agrmm(K^m, k)$ admit the repair scheme given in Corollary~\ref{21.01.07} is $499.$ The code $\grm(K^m, 623)$ has rate $0.167$ and bandwidth $2496$. The code $\agrm(K^m, 499)$ has rate $0.992$ and bandwidth $245312496$. The code $\agrmm(K^m, 499)$ has rate $0.999998468$ and bandwidth $245312496$.
\end{example}
\begin{example}\label{6 example}\rm
The maximum $k$ for which $\grm(\mathbb{F}_{2^7}^5, k)$ admits the repair scheme given in~\cite[Theorem III.1]{RM} is $126.$ The maximum $k$ for which $\agrm(\mathbb{F}_{2^7}^5, k)$ and $\agrmm(\mathbb{F}_{2^7}^5, k)$ admit the repair scheme given in Corollary~\ref{21.01.07} is $63.$ The code $\grm(\mathbb{F}_{2^7}^5, 126)$ has rate $0.009002376$ and bandwidth $889$. The code $\agrm(\mathbb{F}_{2^7}^5, 62)$ has rate $0.96975$ and bandwidth $35970351097$. The code $\agrmm(\mathbb{F}_{2^7}^5, k)$ has rate $0.999999991$ and bandwidth $35970351097$.
\end{example}

Previous examples support the same conclusion. Reed-Muller codes admit repair schemes with superior bandwidth but have massively inferior rates when compared with the augmented codes.

We now compare augmented Reed-Muller and Cartesian codes when the length and the field $\mathbb{F}_{q^t}$ are both fixed.
\begin{example}\rm
Assume that an augmented code of length $n > 8$ over the field $\mathbb{F}_{2^3}$ is required. The augmented Reed-Muller code with minimum length greater than $8$ is the code $\agrm(\mathbb{F}_{2^3}^2, k),$ where $0 \leq k \leq 4.$ The bandwidth is $b = |K|^m - 1 + (t-1)(|K|^{m-1}-1)=8^2-1+(2)(8-1)=77.$ The augmented Cartesian code with minimum length greater than $8$ is the code $\acar(S_1\times S_2,(k_1,k_2)),$ where $n_1=|S_1|=4,$ $n_2=|S_2|=8,$ $k_1 = 0$ and $0 \leq k_2 \leq 4.$ The bandwidth is $b = \prod_{i = 1}^m n_i - 1 + (t-1) \left(\prod_{i = 1}^{m-1} n_i - 1\right)=(4)(8) - 1 + (2) (4 - 1)=37.$ 
\end{example}

Finally we study the case when Reed-Solomon, Reed-Muller and augmented Reed-Muller codes achieve their maximum rate.
\subsection{Maximum rates and asymptotic behavior}
Focusing on the improved rate, here we study the asymptotic behavior of the rate and the bandwidth rate $\displaystyle \frac{b}{nt},$ which represents the fraction of the codeword that is needed by the repair scheme to recover the erased symbol. We continue with the notation $K=\mathbb{F}_{q^t}.$

{\bf Reed-Solomon} The maximum $k$ for which $\rs(K, k)$ admits the repair scheme given in~\cite[Theorem 1]{GW} is $k^*= q^t - q^{t-1}.$ On this case, $\dim_K \rs(K, k) = q^t - q^{t-1}$ and the bandwidth at $k^*$ is $b^*=(q^t - 1).$ Thus
\begin{eqnarray*}
\lim _{t\to \infty}\frac{\displaystyle \dim_K \rs(K, k)}{n}
&=& \lim _{t\to \infty}\frac{q^t-q^{t-1}}{q^t-1} = \lim _{t\to \infty}\frac{q^{t-1} (q-1)}{q^{t-1}(q-\frac{1}{q^t})}=1-\frac{1}{q}.\\
\lim _{t\to \infty}\frac{\text{Bandwidth}}{tn}
&=& \lim _{t\to \infty}\frac{q^t-1}{t(q^t-1)} = \lim _{t\to \infty}\frac{1}{t}=0.
\end{eqnarray*}

{\bf Reed-Muller} The maximum $k$ for which $\grm(K^m, k)$ admits the repair scheme given in~\cite[Theorem III.1]{RM} is $k^*=q^t - 2.$ On this case, $\dim_K \grm(K^m, k^*)= \displaystyle {m + q^t - 2 \choose q^t - 2}$ and  bandwidth at $k^*$ is $b^*=(q^t - 1)t.$ Thus
\begin{eqnarray*}
\lim _{t\to \infty}\frac{\displaystyle \dim_K \grm(K^m, k^*)}{n} &=& 
\lim _{t\to \infty}\frac{\displaystyle {m + q^t - 2 \choose q^t - 2}}{\displaystyle q^{tm}}\\
&=& \lim _{t\to \infty}\frac{\left(m+q^t-2\right)!}{\left(q^t-2\right)! \,\, m! \,\, q^{tm}}\\
&=& \lim _{t\to \infty}\frac{\left(q^t-2+1\right)\cdots \left(q^t-2+m\right)}{m! \,\, q^{tm}}=\frac{1}{m!}\,.\\
\lim _{t\to \infty}\frac{\text{Bandwidth}}{tn}
&=& \lim _{t\to \infty}\frac{(q^t - 1)t}{tq^{tm}} = \lim _{t\to \infty}\frac{q^t - 1}{q^{tm}} = 0.
\end{eqnarray*}

{\bf Augmented Reed-Muller 1}
The maximum $k$ for which $\agrm(K^m, k)$ admits the repair scheme given in Corollary~\ref{21.01.07} is $k^* = q^t - q^{t-1}.$ On this case, $\dim_K \agrm(K^m, k^*)= q^{tm} {-} q^{(t - 1)m}$ and bandwidth at $k^*$ is $b^*= |K|^m - 1 + (t-1)(|K|^{m-1}-1).$ Thus
\begin{eqnarray*}
\lim _{t\to \infty}\frac{\dim_K \agrm(K^m, k^*)}{n} &=& 
\lim _{t\to \infty}\frac{q^{tm} {-} q^{(t - 1)m}}{q^{tm}} = 1-\frac{1}{q^{m}}\,.\\
\lim _{t\to \infty}\frac{\text{Bandwidth}}{nt}
&=& \lim _{t\to \infty}\frac{q^{tm} - 1 + (t-1)(q^{t(m-1)}-1)}{tq^{tm}}\\
&=& \lim _{t\to \infty} \left[\frac{q^{tm} - 1}{tq^{tm}} + \frac{(t-1)(q^{t(m-1)}-1)}{tq^{tm}}\right]=0.
\end{eqnarray*}

{\bf Augmented Reed-Muller 2}
The maximum $k$ for which $\agrmm(K^m, k)$ admits the repair scheme given in Corollary~\ref{21.01.07} is $k^* = q^t - q^{t-1}.$ On this case, $\dim_K \agrmm(K^m, k^*)= q^{tm} {-} m(q^{t-1} - 1) - 1$ and bandwidth at $k^*$ is $b^*= |K|^m - 1 + (t-1)(|K|^{m-1}-1).$ Thus
\begin{eqnarray*}
\lim _{t\to \infty}\frac{\dim_K \agrmm(K^m, k^*)}{n}
&=& \lim _{t\to \infty}\frac{q^{tm} {-} mq^{t-1} + m - 1}{q^{tm}}\\
&=& \lim _{t\to \infty}1 - \frac{m}{q^{t(m - 1) + 1}} + \frac{m - 1}{q^{tm}}=1.
\\
\lim _{t\to \infty}\frac{\text{Bandwidth}}{nt}
&=& \lim _{t\to \infty}\frac{q^{tm} - 1 + (t-1)(q^{t(m-1)}-1)}{tq^{tm}}\\
&=& \lim _{t\to \infty} \left[\frac{q^{tm} - 1}{tq^{tm}} + \frac{(t-1)(q^{t(m-1)}-1)}{tq^{tm}}\right]=0.
\end{eqnarray*}

{\bf Augmented Cartesian Codes}
The maximum $\bm{k}^*$ for which $\acar(\mathcal{S}, \bm{k}^*)$ admits the repair scheme given in Corollary~\ref{21.05.25} is $k^*_i = n_i - q^{t-1}.$ On this case, $\dim \acar(\mathcal{S}, {\bm k}^*) = \prod_{j = 1}^m n_j - \prod_{j = 1}^m (n_j - k_j)$ and bandwidth at ${\bm k}^*$ is $b^*= \prod_{i = 1}^m n_i - 1 + (t-1) \left(\prod_{i = 1}^{m-1} n_i - 1\right).$ Thus
\begin{align*}
    \lim_{t \to \infty} \frac{\textrm{Bandwidth}}{nt}
    &= \lim_{t \to \infty} \frac{\prod_{i = 1}^m n_i - 1 + (t - 1)(\prod_{i = 1}^{m - 1} n_i - 1)}{ t\prod_{i = 1}^{m} n_i}\\
    &= \lim_{t \to \infty} \frac{\prod_{i = 1}^{m}n_i - 1}{ t\prod_{i = 1}^m n_i} + \frac{t - 1}{t}\left(\frac{ \prod_{i = 1}^{m - 1}n_i}{\prod_{i = 1}^{m}n_i} - \frac{1}{\prod_{i = 1}^{m}n_i}\right)\\
    &= \lim_{t \to \infty}\frac{\prod_{i = 1}^{m}n_i - 1}{ t\prod_{i = 1}^mn_i} + \frac{t - 1}{t}\left(\frac{1}{n_m} - \frac{1}{\prod_{i = 1}^{m}n_i}\right).
\end{align*}
In the case where $n_m = \mathcal{O}(t)$, we have that this limit is $0$. 

Now we will discuss the limit of the rate of an Augmented Cartesian Code 1 as the extension degree $t$ approaches infinity through examples. We will find that varying the Cartesian evaluation set will result in augmented Cartesian codes with rate limits varying between $0$ and $1 - \frac{1}{q^m}$, even when taking the maximum allowable $k_j$'s.
\begin{example}\rm
Suppose we are in the case when the evaluation set $\mathcal{S} = S_1 \times \cdots \times S_m$ is such that $n_j = q^{t - 1} + 1$ for all $j \in [m].$ Consider the augmented Cartesian 1 code $\acar(\mathcal{S}, \bm{k}^*)$ with maximum rate. This happens when $\bm{k}^*=\bm{1}.$ The limit of the rate of this code as $t$ approaches infinity is
\begin{align*}
    \lim_{t \to \infty}\frac{\dim_K\acar(\mathcal{S}, \bm{1})}{n} &= \lim_{t \to \infty}\frac{\prod_{i = 1}^m n_i - \prod_{i = 1}^m(n_i - k_i)}{\prod_{i = 1}^m n_i}\\
    &= \lim_{t \to \infty}\frac{(q^{t - 1} + 1)^m - (q^{t - 1})^m}{(q^{t - 1} + 1)^m} = 0.
\end{align*}
\end{example}

\begin{example}\rm
Suppose we are in the case when the evaluation set $\mathcal{S} = S_1 \times \cdots \times S_m$ is such that $n_i=q^{t-1}$ for $i\in [m-1]$ and $n_m=2q^{t-1}.$ Consider the augmented Cartesian 1 code $\acar(\mathcal{S}, \bm{k}^*)$ with maximum rate. This happens when $k_i^*=0$ for $i\in [m-1]$ and $k^*_m=q^{t-1}.$ The limit of the rate of this code as $t$ approaches infinity is
\begin{align*}
    \lim_{t \to \infty}\frac{\dim_K\acar(\mathcal{S}, \bm{k})}{n} &= \lim_{t \to \infty}\frac{\prod_{i = 1}^m n_i - \prod_{i = 1}^m(n_i - k_i)}{\prod_{i = 1}^m n_i}\\
    &= \lim_{t \to \infty}1 - \frac{\prod_{i = 1}^m q^{t - 1}}{2\prod_{i = 1}^m q^{t - 1}} = 1 - \frac{1}{2} = \frac{1}{2}.
\end{align*}
\end{example}

\begin{example}\rm
Lastly, consider the case when $|S| = K^m$. As  this is an augmented Reed-Muller code, we obtain
\begin{align*}
\lim _{t\to \infty}\frac{\dim_K \acar(\mathcal{S}, \bm{k})}{n} = \lim _{t\to \infty}\frac{q^{tm} {-} q^{(t - 1)m}}{q^{tm}} = 1-\frac{1}{q^{m}}.
\end{align*}
\end{example}
A similar situation happens with the augmented Cartesian codes 2.
We summarize these findings in Table~\ref{tab:Table 2}.
\begin{table}[h]
    \begin{center}
    \scalebox{1.1}{
    \renewcommand{\arraystretch}{2.15}
    \begin{tabular}{ |c|c|c|c|c| } 
    \hline
    Code & Dimension & $\displaystyle \lim_{t\to \infty} \text{Rate}$& $\displaystyle \lim_{t\to \infty} \frac{\displaystyle b^*}{nt}$ \\ [0.2 cm]
    \hline
    $\rs(K,max)$ & $q^t - q^{t-1}$ & $\displaystyle 1-\frac{1}{q}$ & $0$ \\ [0.2 cm]
    \hline
    $\grm(K^m,max)$ & $\displaystyle {m + q^t - 2 \choose q^t - 2}$ & $\displaystyle \frac{1}{m!}$ & $0$ \\ [0.2 cm]
    \hline
    $\agrm(K^m, max)$ & $q^{tm} {-} q^{(t - 1)m}$ & $\displaystyle 1-\frac{1}{q^m}$ & $0$ \\ [0.2 cm]
    \hline
    $\agrmm(K^m, max)$ & $q^{tm} {-} m(q^{t-1}{-}1) {-}1$ & $1$ & $0$ \\ [0.2 cm]
    \hline
    $\acar({\bf \mathcal{S}}, max)$ & $\displaystyle\prod_{j = 1}^m n_j - \displaystyle\prod_{j = 1}^m (n_j - k_j)$ & \text{b/w} $0-1$ & $0$ \\ [0.2 cm]
    \hline
    $\acarr({\bf \mathcal{S}}, max)$ & $\displaystyle\prod_{i = 1}^m n_i - \displaystyle\sum_{i = 1}^m (n_i - k_i-1) -1$ & \text{b/w} $0-1$ & $0$ \\ [0.2 cm]
    \hline
    \end{tabular}}
    \end{center}
    \caption{Asymptotic behavior of the $\rs, \grm, \agrm$ and $\agrmm,$ when each achieves the maximum dimension so the associated repair scheme can be applied. The number $\displaystyle \frac{b^*}{nt}$ represents the fraction of the codeword that is needed by the repair scheme to recover an erased symbol.}
    \label{tab:Table 2}
\end{table}

 As expected, the augmented codes, which were designed to maximize the rate of the code, have a higher repair bandwidth as well, due to the trade-off between the rate of a code and the bandwidth of its associated repair scheme. In the end, neither of these schemes is objectively better than the other. Any potential user should opt to use the scheme that best deals with the parameter most important to their application, whether that be one that requires high rate codes or one that requires low bandwidth recovery.
\section{Conclusions}\label{conclusions}
In this paper, we introduce a new family of evaluation codes, called augmented Cartesian codes, along with repair schemes for single and certain multiple erasures. They can be designed to have higher rate than their traditional counterparts and include as a special case augmented Reed-Muller codes. In some circumstances, these repair schemes may have lower bandwidth and bitwidth than comparable algebraic geometry codes (such as Reed-Solomon or Hermitian codes). There are parameter ranges in which repairing Reed-Solomon codes may not be available, such as dimension between $q^t - q^{t-1}$ and $q^t$ over $\F_{q^t}$.  In some cases, augmented Reed-Muller codes may be designed along with repair schemes for single or pairs of erasures. More generally, we can use augmented Cartesian codes to provide high-rate codes with repair schemes for single erasures and certain pairs of erasures in those settings where the augmented Reed-Muller codes are not.


\begin{thebibliography}{9}
\bibitem{CLMS} E. Camps, H. H. L\'opez, G. L. Matthews and E. Sarmiento,
Polar Decreasing Monomial-Cartesian Codes,
IEEE Transactions on Information Theory, {\bf 67} (2021), no. 6, 3664--3674.

\bibitem{RM} T. Chen and X. Zhang,
Repairing Generalized Reed-Muller Codes,
\url{https://arxiv.org/pdf/1906.10310.pdf}.

\bibitem{DGM} P. Delsarte, J. M. Goethals and F. J. Mac Williams,
On generalized Reed-Muller codes and their relatives,
Information and control, {\bf 16} (1970), no. 5, 403--442.

\bibitem{DSS1} A. Dimakis, P. Godfrey, Y. Wu, M. Wainwright and K. Ramchandran,
Network coding for distributed storage systems,
IEEE Transactions on Information Theory, {\bf 56} (2010), no. 9, 4539--4551.

\bibitem{DSS2} A. Dimakis, K. Ramchandran, Y. Wu and C. Suh,
A survey on network codes for distributed storage,
Proceedings of the IEEE, {\bf 99} (2011), no. 3, 476--489.

\bibitem{Geil} O. Geil, and C. Thomsen,
Weighted Reed-Muller codes revisited,
Designs Codes and Cryptography, {\bf 66} (2013), 195--220.
\url{https://doi.org/10.1007/s10623-012-9680-8}

\bibitem{GW} V. Guruswami and M. Wootters,
Repairing Reed-Solomon Codes,
IEEE Transactions on Information Theory, {\bf 63} (2017), no. 9, 5684--5698.

\bibitem{Huffman-Pless} W. C. Huffman and V. Pless,
\textit{Fundamentals of error-correcting codes},
Cambridge University Press, Cambridge, 2003.

\bibitem{JVV} D. Jaramillo, M. Vaz Pinto, and R. H. Villarreal,
Evaluation codes and their basic parameters,
Designs Codes and Cryptography, {\bf 89} (2021), 269--300.

\bibitem{JLX} L. Jin, Y. Luo and C. Xing,
Repairing Algebraic Geometry Codes,
IEEE Transactions on Information Theory, {\bf 64} (2018), no. 2, 900--908.

\bibitem{mcc}H. H. L\'opez, G. L. Matthews and I. Soprunov,
Monomial-Cartesian codes and their duals, with applications to LCD codes, quantum codes, and locally recoverable codes,
Designs Codes and Cryptography, {\bf 88} (2020), 1673--1685.

\bibitem{LMV}H. H. L\'opez, G. L. Matthews and D. Valvo,
Augmented Reed-Muller Codes of High Rate and Erasure Repair,
Proceedings of the IEEE, (2021), to appear.

\bibitem{lopez-villa} H. H. L\'opez, C. Renter\'ia-M\'arquez and R. H. Villarreal,
Affine Cartesian codes,
Designs, Codes and Cryptography {\bf 71} (2014), no. 1, 5--19.

\bibitem{LSV} H. H. L\'opez, I. Soprunov and R. H. Villarreal,
The dual of an evaluation code,
Designs, Codes and Cryptography, (2021), \url{https://doi.org/10.1007/s10623-021-00872-w}.

\bibitem{Dau} H. Dau, I. Duursma, H. Kiah and O. Milenkovic,
Repairing Reed-Solomon Codes With Multiple Erasures,
IEEE Transactions on Information Theory, {\bf 64} (2018), no. 10, 6567--6582.

\bibitem{Finite Fields Book} R. Lidl and H. Niederreiter,
{\it Introduction to finite fields and their applications},
Cambridge University Press (1994), \url{https://doi.org/10.1017/CBO9781139172769}.


\end{thebibliography}
\end{document}